\newtheorem{theorem}{Theorem}[section]
\newtheorem{lemma}[theorem]{Lemma}
\newtheorem{conjecture}{Conjecture}[section]
\numberwithin{equation}{section}
\title[Random matrix theory and moments of moments of $L$-functions]{Random matrix theory and moments of moments of $L$-functions}
\author{J. C. Andrade}
\address{Department of Mathematics, University of Exeter, Exeter, EX4 4QF, United Kingdom}
\email{j.c.andrade@exeter.ac.uk}
\author{C. G. Best}
\address{Department of Mathematics, University of Exeter, Exeter, EX4 4QF, United Kingdom}
\email{cgb212@exeter.ac.uk}
\date{\today}
\subjclass[2010]{Primary 60B20; Secondary 11M06, 11M50}
\keywords{Random matrix theory, moments, Riemann zeta function, $L$-functions}
\begin{document}

\begin{abstract}
    We give an analytic proof of the asymptotic behaviour of the moments of moments of the characteristic polynomials of random symplectic and orthogonal matrices. We therefore obtain alternate, integral expressions for the leading order coefficients previously found by Assiotis, Bailey and Keating. We also discuss the conjectures of Bailey and Keating for the corresponding moments of moments of $L$-functions with symplectic and orthogonal symmetry. Specifically, we show that these conjectures follow from the shifted moments conjecture of Conrey, Farmer, Keating, Rubinstein and Snaith.
\end{abstract}

\maketitle

\section{Introduction}

Let $G(N)\in\{U(N),Sp(2N),SO(2N)\}$, where $U(N)$ is the group of $N\times N$ unitary matrices, $Sp(2N)$ is the the group of $2N\times 2N$ unitary symplectic matrices and $SO(2N)$ is the group of $2N\times 2N$ orthogonal matrices with determinant $+1$. Also, let

\begin{equation}
    P_{G(N)} (\theta; A)=\det\left(I-Ae^{-i\theta}\right),
\end{equation}
be the characteristic polynomial of a matrix $A\in G(N)$ on the unit circle. Recently, the {\it moments of moments} of these characteristic polynomials have been the object of much study. The moments of moments consist of an average over the unit circle first and then an average through the group, hence the name. Specifically, they are defined as

\begin{equation}
    \text{MoM}_{G(N)} (k,\beta):=\int_{G(N)} \left(\frac{1}{2\pi} \int_0^{2\pi} |P_{G(N)} (\theta; A)|^{2\beta} d\theta\right)^k dA,
\end{equation}
where $dA$ is the Haar measure on $G(N)$.

One particular motivation for the study of the moments of moments is their link to the maximum value of the characteristic polynomials on the unit circle. For example, in the case of the unitary group $U(N)$, Fyodorov, Hiary and Keating \cite{FHK} and Fyodorov and Keating \cite{FK}, using heuristics involving the moments of moments, made conjectures for the maximum value of $|P_{U(N)} (\theta;A)|$ for $0\leq \theta<2\pi$. For an in depth discussion of the conjectures of \cite{FHK,FK} and work in their direction, see \cite{BK1}.

Concerning the moments of moments, one of the conjectures of \cite{FK} is that as $N\to\infty$,
\begin{align}
    \text{MoM}_{U(N)}(k,\beta)\sim\begin{cases} \Big(\frac{\mathcal{G}(1+\beta)^2}{\mathcal{G}(1+2\beta)\Gamma(1-\beta^2)}\Big)^k \Gamma(1-k\beta^2) N^{k\beta^2} &\text{ if }k<1/\beta^2, \\ c(k,\beta) N^{k^2\beta^2-k+1} &\text{ if }k>1/\beta^2,\end{cases}
\end{align}
where $\mathcal{G}(s)$ is the Barnes $\mathcal{G}$-function and $c(k,\beta)$ is some unspecified function of $k$ and $\beta$. At the transition point $k=1/\beta^2$, the moments of moments are conjectured to grow like $N \log N$. The above conjecture was proven in the case that $k,\beta\in\mathbb{N}$ by Bailey and Keating \cite{BK2} with the following result.

\begin{theorem}[Bailey-Keating \cite{BK2}] \label{BK thm}
For $k,\beta\in\mathbb{N}$,

\begin{equation} \label{BK result}
    \textup{MoM}_{U(N)} (k,\beta)=c(k,\beta) N^{k^2\beta^2-k+1}\big(1+O\big(\tfrac{1}{N}\big)\big),
\end{equation}
where $c(k,\beta)$ can be written explicitly in the form of an integral. Furthermore, $\textup{MoM}_{U(N)} (k,\beta)$ is a polynomial in $N$ of degree $k^2\beta^2-k+1$.
\end{theorem}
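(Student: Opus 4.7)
The plan is to reduce $\text{MoM}_{U(N)}(k,\beta)$ to a multiple integral of a confluent shifted moment of the unitary characteristic polynomial. Since $\beta\in\mathbb{N}$, we may write $|P_{U(N)}(\theta;A)|^{2\beta}=P_{U(N)}(\theta;A)^\beta\,\overline{P_{U(N)}(\theta;A)}^\beta$, and expanding the $k$-th power of the inner angular average in the definition of $\text{MoM}_{U(N)}(k,\beta)$ and exchanging the order of integration gives
\begin{align*}
\text{MoM}_{U(N)}(k,\beta)=\frac{1}{(2\pi)^k}\int_{[0,2\pi]^k}\int_{U(N)}\prod_{i=1}^k P_{U(N)}(\theta_i;A)^\beta\,\overline{P_{U(N)}(\theta_i;A)}^\beta\,dA\,d\theta_1\cdots d\theta_k.
\end{align*}
The inner Haar integral is a shifted moment with $k\beta$ copies of $P_{U(N)}$ and $k\beta$ copies of $\overline{P_{U(N)}}$ evaluated at $k$ confluent angles, which is known exactly for the unitary group.

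The central step is to introduce small distinct shifts $\alpha_{i,r},\gamma_{i,r}$ around each $\theta_i$ and apply the Conrey--Farmer--Keating--Rubinstein--Snaith exact formula for unitary shifted moments (equivalently the Bump--Gamburd identity). This writes the shifted Haar integral as a sum over the $\binom{2k\beta}{k\beta}$ ``swaps'' of the shift parameters, each term a rational function of the exponentials of the shifts multiplied by explicit $N$-dependent exponential factors. One would then take the confluent limit (collapsing the shifts onto the $\theta_i$'s) and evaluate the $k$-fold angular integration by residue calculus. The leading contribution of order $N^{k^2\beta^2-k+1}$ should emerge from the maximally clustered configurations, with $c(k,\beta)$ appearing as an explicit multiple contour integral. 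Polynomiality of $\text{MoM}_{U(N)}(k,\beta)$ in $N$ follows from the polynomial structure that arises once all $\binom{2k\beta}{k\beta}$ swap contributions are combined and the angular integrals performed, since the oscillatory $e^{\pm iN\theta}$-type factors collectively cancel upon integration.

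The main obstacle is the interplay between the two limiting procedures: the shifts collapsing to their confluent values and the angular integration. Individually the swap terms are singular as the shifts merge, yet their sum is regular, so extracting the leading power of $N$ and exhibiting $c(k,\beta)$ as a clean integral requires tracking these cancellations while simultaneously identifying which swap contributions survive the angular residue calculation. Pinning down the precise degree $k^2\beta^2-k+1$ (rather than merely an upper bound) and isolating the corresponding integral coefficient, as opposed to an implicit one, is where the bulk of the combinatorial and analytic bookkeeping resides.
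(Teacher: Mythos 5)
Your opening move is right: changing the order of integration to reduce $\textup{MoM}_{U(N)}(k,\beta)$ to a $k$-fold angular integral of the CFKRS autocorrelation function is exactly how Bailey and Keating begin. After that, however, the proposal diverges from what actually works and, more importantly, does not carry the argument through. First, you plan to use the combinatorial ``swap'' form of the CFKRS/Bump--Gamburd identity and then wrestle with the confluent limit. The paper explicitly says Bailey--Keating use that combinatorial expansion only for the polynomiality claim; the asymptotics come from the \emph{multiple contour integral} representation, in which one deforms the $z$-contours into small circles around the poles $\pm i\theta_m$, rescales $z_n=v_n/N+i\mu_n$, and rescales the angles $t_m=N\theta_m$. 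The residue analysis is done in the $z$-variables, not the $\theta$-variables, and the ``angular integration by residue calculus'' you propose is not the mechanism used.

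Second, the claim that the leading power $N^{k^2\beta^2-k+1}$ ``should emerge from the maximally clustered configurations'' misdescribes the mechanism, at least by the paper's account. The decisive step is a vanishing lemma (the analogue of Lemma~\ref{zero summands} here, Lemma~3.2 in \cite{BK2}): most of the contour choices give an identically zero contribution because the pole order of the Vandermonde-over-denominator factor is too low. Once the nonvanishing terms are isolated, \emph{every} surviving term contributes at the same power of $N$ --- the contribution from the small circles ($N^{|\mathcal{A}|}$-type factors) and the contribution from the rescaled $\theta$-integrals ($N^{\cdots-|\mathcal{A}|}$-type factors) add up to the same fixed exponent independently of the term. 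There is no variational ``maximal clustering'' selection, so that heuristic would not give you the exact degree, let alone the integral formula for $c(k,\beta)$.

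Third, and most fundamentally, the proposal stops where the proof must begin. You correctly identify the obstacles --- the singular confluent limit of the swap terms, the interplay of the two limits, and the bookkeeping needed to pin down the exact degree --- but you do not introduce the vanishing lemma, the rescaling, or the integral-representation argument that overcome them. As written this is a statement of the problem with a plausible setup, not a proof of the theorem.
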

The proof of theorem \ref{BK thm} uses the fact that for $k\in\mathbb{N}$, one can change the order of integration to obtain

\begin{equation}
    \text{MoM}_{U(N)} (k,\beta)=\frac{1}{(2\pi)^k}\int_0^{2\pi}\dotsi\int_0^{2\pi} I_{k,\beta}(U(N),\theta_1,\dots,\theta_k)\, d\theta_1\dots d\theta_k,
\end{equation}
where

\begin{equation}
    I_{k,\beta} (U(N),\underline{\theta})=\int_{U(N)} \prod_{j=1}^k |P_{U(N)} (\theta_j;A)|^{2\beta} dA.
\end{equation}
The function $I_{k,\beta} (U(N),\underline{\theta})$ is an autocorrelation function of the characteristic polynomials and was computed by Conrey et al. \cite{CFKRS1}. Two equivalent expressions for $I_{k,\beta} (U(N),\underline{\theta})$ are given in \cite{CFKRS1}; one takes the form of a combinatorial sum and the other is as a multiple contour integral. The first of these was used in \cite{BK2} to prove that $\text{MoM}_{U(N)}(k,\beta)$ is a polynomial and then a complex analytic argument using the integral representation was used to determine the asymptotic behaviour.

Assiotis and Keating \cite{AK} gave an alternate proof of the asymptotic formula in theorem \ref{BK thm} using a combinatorial approach involving constrained Gelfand-Tsetlin patterns. They therefore obtained a different expression for the leading order coefficient $c(k,\beta)$ as the volume of a certain region. It is remarked in \cite{AK} that their expression for $c(k,\beta)$ appears to be very difficult to obtain from the expression obtained in \cite{BK2}.

The approach used in \cite{AK} was then extended by Assiotis, Bailey and Keating \cite{ABK} to the symplectic and special orthogonal groups to determine the asymptotic behaviour of the moments of moments for $k,\beta\in\mathbb{N}$. Their results are stated below.

\begin{theorem}[Assiotis et al. \cite{ABK}] \label{ABK Sp thm}
    Let $k,\beta\in\mathbb{N}$. Then, $\textup{MoM}_{Sp(2N)}(k,\beta)$ is a polynomial function in $N$. Moreover,
    
    \begin{equation} \label{ABK Sp result}
        \textup{MoM}_{Sp(2N)}(k,\beta)=\mathfrak{c}_{Sp}(k,\beta) N^{k\beta(2k\beta+1)-k} \left(1+O\left(\tfrac{1}{N}\right)\right),
    \end{equation}
    where the leading order term coefficient $\mathfrak{c}_{Sp}(k,\beta)$ is the volume of a certain convex region and is strictly positive.
\end{theorem}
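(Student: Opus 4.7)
The plan is to adapt the analytic strategy of Bailey-Keating \cite{BK2} from the unitary setting to the symplectic one. The first step is a Fubini interchange: for $k,\beta\in\mathbb{N}$ the integrand is non-negative and bounded on a compact domain, so one may write
\begin{equation*}
    \text{MoM}_{Sp(2N)}(k,\beta) = \frac{1}{(2\pi)^k}\int_0^{2\pi}\!\!\!\cdots\!\!\int_0^{2\pi} I_{k,\beta}^{Sp}(\underline{\theta})\, d\theta_1\cdots d\theta_k,
\end{equation*}
where $I_{k,\beta}^{Sp}(\underline{\theta}) := \int_{Sp(2N)}\prod_{j=1}^k |P_{Sp(2N)}(\theta_j;A)|^{2\beta}\,dA$ is an autocorrelation of $2k\beta$ symplectic characteristic polynomials.

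The next step is to invoke the Conrey-Farmer-Keating-Rubinstein-Snaith formula for this symplectic autocorrelation, which is available in two equivalent forms: a finite combinatorial sum over the hyperoctahedral Weyl group (reflecting the spectral symmetry $\theta\mapsto -\theta$), and a multiple contour integral with an exponential factor $\exp(N\sum_j z_j)$ multiplied by a type-$C$ rational kernel. Substituting the combinatorial form into the angular integrals and integrating term by term produces a finite sum of polynomials in $N$, which establishes the polynomiality of $\text{MoM}_{Sp(2N)}(k,\beta)$.

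To extract the leading order I would work with the contour form, mirroring the complex-analytic approach of \cite{BK2}. After rescaling the contour variables $z_j\mapsto z_j/N$ and partitioning $[0,2\pi]^k$ into a neighbourhood of the full-coincidence-at-a-fixed-point configuration (all $\theta_j$ close to $0$ or all close to $\pi$) and its complement, a residue computation on each piece tracks how the pole orders of the kernel combine with the rescaling Jacobian to produce a definite power of $N$. The dominant stratum is expected to yield the advertised exponent $k\beta(2k\beta+1)-k$ and to present $\mathfrak{c}_{Sp}(k,\beta)$ as an explicit multiple integral over a real parameter region.

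The principal obstacle, absent in the unitary case, is that the symplectic kernel is singular not only along the diagonals $\theta_i=\theta_j$ but also along the antidiagonals $\theta_i=-\theta_j$ and at the fixed points $\theta\in\{0,\pi\}$, producing several competing near-singular regions in $\underline{\theta}$. The crux of the argument is therefore to identify which stratum achieves the largest power of $N$ and to prove that every other stratum contributes strictly lower order. Once this is done, strict positivity of $\mathfrak{c}_{Sp}(k,\beta)$ should follow because the surviving residue integrand is a manifestly non-negative density—essentially an absolute-square-of-Vandermonde-type expression built from the type-$C$ kernel—integrated over a real polytope of positive measure.
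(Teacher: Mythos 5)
Your proposal is closer to the paper's alternate route (the proof of Theorem~\ref{MoM Sp thm}) than to the actual proof of Theorem~\ref{ABK Sp thm}, which the paper does not prove but cites from \cite{ABK}. The \cite{ABK} argument is combinatorial: it realizes $\textup{MoM}_{Sp(2N)}(k,\beta)$ through constrained Gelfand--Tsetlin patterns and identifies $\mathfrak{c}_{Sp}(k,\beta)$ as the Euclidean volume of an explicit convex region, which is how polynomiality and strict positivity are obtained. The complex-analytic route you sketch reproduces the paper's path to Theorem~\ref{MoM Sp thm}: Fubini interchange, the CFKRS multiple contour integral for $I_{k,\beta}(Sp(2N),\underline{\theta})$, deformation of each contour into small circles about $\pm i\theta_m$, a vanishing lemma (Lemma~\ref{zero summands}) eliminating configurations with $m_j+n_j>2\beta$, the shift $z_n=v_n/N+i\mu_n$ to the poles (not merely the rescaling $z_j\mapsto z_j/N$ you describe, which would miss the pole locations), the rescaling $t_m=N\theta_m$, and a count of factors of $N$ yielding $k\beta(2k\beta+1)-k$. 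Where you are right but vague, the paper is precise; these details are recoverable.

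The genuine gap is your final claim about positivity. You assert that strict positivity of $\mathfrak{c}_{Sp}(k,\beta)$ follows because the surviving residue integrand is a ``manifestly non-negative density'' over a real polytope. The paper's leading-order coefficient $\gamma_{Sp}(k,\beta)$ in (\ref{gamma_Sp def}) is a $k$-fold sum over $\underline{l}\in\{0,\dots,2\beta\}^k$ with sign factors $(-1)^{k\beta+\sum_m l_m}$, multiplied by a $2k\beta$-fold contour integral of $f(\underline{v};\underline{l})\Psi_{k,\beta}(\underline{v};\underline{l})$; this is not recognizably a positive quantity, and indeed the paper does not attempt to show positivity from this expression. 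Instead, the proof of Theorem~\ref{MoM Sp thm} concludes by invoking the \cite{ABK} result --- precisely the statement you are trying to prove --- to deduce $\gamma_{Sp}(k,\beta)=\mathfrak{c}_{Sp}(k,\beta)>0$. So the analytic method you propose, as carried out in the paper, establishes the existence of an asymptotic $\gamma_{Sp}(k,\beta)N^{k\beta(2k\beta+1)-k}$ but does not by itself rule out $\gamma_{Sp}(k,\beta)=0$. To close the gap you would need either a separate positivity argument for the integral expression (not present in the literature the paper cites) or to fall back on the \cite{ABK} volume interpretation, which would make the argument circular if \cite{ABK} is the very theorem under proof.
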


\begin{theorem}[Assiotis et al. \cite{ABK}] \label{ABK SO thm}
    Let $k,\beta\in\mathbb{N}$. Then, $\textup{MoM}_{SO(2N)}(k,\beta)$ is a polynomial function in $N$. Moreover,
    
    \begin{equation}
        \textup{MoM}_{SO(2N)}(1,1)=2(N+1)
    \end{equation}
    otherwise,
    
    \begin{equation} \label{ABK SO result}
        \textup{MoM}_{SO(2N)}(k,\beta)=\mathfrak{c}_{SO}(k,\beta) N^{k\beta(2k\beta-1)-k} \left(1+O\left(\tfrac{1}{N}\right)\right),
    \end{equation}
    where the leading order term coefficient $\mathfrak{c}_{SO}(k,\beta)$ is given as a sum of volumes of certain convex regions and is strictly positive.
\end{theorem}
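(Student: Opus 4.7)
The plan is to adapt the analytic strategy that Bailey and Keating applied in the unitary case (Theorem \ref{BK thm}) to the orthogonal symmetry class. Since $k\in\mathbb{N}$, one interchanges the order of integration to obtain
\[
\text{MoM}_{SO(2N)}(k,\beta) = \frac{1}{(2\pi)^k}\int_0^{2\pi}\!\!\cdots\!\!\int_0^{2\pi} I_{k,\beta}(SO(2N),\underline{\theta})\, d\theta_1\cdots d\theta_k,
\]
where $I_{k,\beta}(SO(2N),\underline{\theta})$ is the average over $SO(2N)$ of $\prod_{j=1}^k |P_{SO(2N)}(\theta_j;A)|^{2\beta}$. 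The key input is the CFKRS evaluation of this autocorrelation function, which is available in both a combinatorial sum form (indexed by sign choices coming from the orthogonal Weyl group) and an equivalent multiple contour integral form. Compared with $U(N)$, the integrand carries extra factors of the form $(1-e^{-i(\theta_a+\theta_b)})^{-1}$, reflecting the functional equation satisfied by the characteristic polynomial of an orthogonal matrix.

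I would first use the combinatorial sum form of $I_{k,\beta}(SO(2N),\underline{\theta})$ to establish polynomiality: each term in the sum is the product of an $N$-polynomial of degree at most $k\beta(2k\beta-1)$ and a function of $\underline{\theta}$ that is integrable over $[0,2\pi]^k$, so termwise integration yields a polynomial in $N$. The leading-order asymptotic would then be extracted from the contour integral form. Rescaling $\theta_j\mapsto\phi_j/N$ to zoom into the microscopic regime contributes a factor $N^{-k}$ from the Jacobian and $N^{k\beta(2k\beta-1)}$ from the diagonal behaviour of the autocorrelation (matching $\int_{SO(2N)}|P(\theta;A)|^{2k\beta}\,dA\sim N^{k\beta(2k\beta-1)}$), which together give the predicted exponent. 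Shifting the $z$-contours of the CFKRS representation to collect residues at the relevant pole configurations, and then carrying out the $\underline{\phi}$-integration, expresses $\mathfrak{c}_{SO}(k,\beta)$ as an explicit multiple contour integral; positivity would then follow either from the residue structure or by matching with the volume description obtained in \cite{ABK}.

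The exceptional case $(k,\beta)=(1,1)$ requires separate treatment: here the nominal exponent $k\beta(2k\beta-1)-k$ vanishes, the residue integral for $\mathfrak{c}_{SO}(1,1)$ is zero, and the actual growth $2(N+1)$ comes from what is generically a subleading piece of the CFKRS sum. This can be handled by a direct computation of the first moment of $|P_{SO(2N)}(\theta;A)|^2$ via the Weyl integration formula on $SO(2N)$.

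The main obstacle is controlling the contour shifts uniformly in $\underline{\phi}$. The presence of the extra anti-diagonal poles at $\phi_a+\phi_b=0$, absent in the unitary case, means that the residue analysis involves a substantially richer combinatorial structure and must be carried out carefully enough to ensure both that the leading $N^{k\beta(2k\beta-1)-k}$ term is correctly identified and that the resulting integral expression for $\mathfrak{c}_{SO}(k,\beta)$ is finite and strictly positive. A related subtlety is that the degeneracy seen at $(k,\beta)=(1,1)$ must be shown not to recur elsewhere, which amounts to checking nonvanishing of a family of explicit integrals indexed by $(k,\beta)\in\mathbb{N}^2$.
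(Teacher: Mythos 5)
There is a genuine gap here, and it is worth being clear about what this statement is: the paper you are being compared against does \emph{not} prove theorem \ref{ABK SO thm} at all — it quotes it from \cite{ABK}, where the proof is combinatorial (a representation of $\textup{MoM}_{SO(2N)}(k,\beta)$ in terms of counting constrained Gelfand--Tsetlin-type patterns, i.e.\ lattice points), and it is precisely that combinatorial representation which delivers all three ingredients of the statement simultaneously: polynomiality in $N$, the identification of $\mathfrak{c}_{SO}(k,\beta)$ as a sum of volumes of convex regions, and its strict positivity. Your proposal is instead essentially the analytic route that the present paper uses to prove its own theorem \ref{MoM SO thm}: interchange the $\theta$- and group-averages, insert the CFKRS contour-integral formula (\ref{I(SO) def}) from \cite{CFKRS1}, split the contours over the poles $\pm i\theta_m$, rescale $\theta_m = t_m/N$, and extract the exponent $k\beta(2k\beta-1)-k$ with a coefficient given by a multiple integral as in (\ref{gamma_SO def}). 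That route does produce the asymptotic \emph{shape} of (\ref{ABK SO result}), but it cannot, as you have set it up, prove the statement: the crucial point is the strict positivity (equivalently, nonvanishing) of the leading coefficient. In the paper this is obtained only by comparing the integral coefficient $\gamma_{SO}(k,\beta)$ with theorem \ref{ABK SO thm} itself, so your fallback of "matching with the volume description obtained in \cite{ABK}" is circular when the ABK theorem is the thing to be proved, and your alternative "positivity from the residue structure" is unsubstantiated: the expression (\ref{gamma_SO def}) is a signed sum (note the factor $(-1)^{k\beta+\sum_m l_m}$) of oscillatory multiple contour integrals, and no direct argument is known that it is nonzero. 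Without nonvanishing, an asymptotic of the form $\gamma_{SO}(k,\beta)N^{k\beta(2k\beta-1)-k}(1+O(1/N))$ does not even determine the order of growth, and your own observation that the coefficient degenerates at $(k,\beta)=(1,1)$ shows why this cannot be waved away; ruling out further degeneracies is exactly the hard content that the analytic method does not supply.

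Two secondary issues. First, the clause of the theorem asserting that $\mathfrak{c}_{SO}(k,\beta)$ \emph{is} a sum of volumes of convex regions is intrinsic to the combinatorial proof; your method would at best give an integral expression, which is a genuinely different (and, as remarked already for the unitary case in \cite{AK}, apparently hard to reconcile) description, so that part of the statement would remain unproved. Second, your polynomiality argument is too quick: individual terms of the CFKRS combinatorial-sum representation are singular where the $\theta_j$ coincide or where $\theta_a+\theta_b\equiv 0 \pmod{2\pi}$, so "each term is an $N$-polynomial times an integrable function, integrate termwise" needs real justification (in the unitary case \cite{BK2} this step requires a careful treatment of the removable singularities; for the symplectic and orthogonal groups the polynomiality in the literature comes from the combinatorial approach of \cite{ABK}, not from the contour-integral formula). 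In short: your proposal reproduces the asymptotic analysis of theorem \ref{MoM SO thm} of this paper, but as a proof of theorem \ref{ABK SO thm} it is missing the essential positivity/nonvanishing input and the volume interpretation, both of which come from a different, combinatorial argument.
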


Our goal here is to apply the complex analytic method used in \cite{BK2} to give an alternate proof of the asymptotic formulae in theorems \ref{ABK Sp thm} and \ref{ABK SO thm}. In particular, we obtain integral expressions for the leading order coefficients. Our main result is the following.

\begin{theorem} \label{MoM Sp thm}
    For $k,\beta\in\mathbb{N}$,
    
    \begin{equation}
        \textup{MoM}_{Sp(2N)} (k,\beta)=\gamma_{Sp} (k,\beta) N^{k\beta(2k\beta+1)-k} \left(1+O\left(\tfrac{1}{N}\right)\right),
    \end{equation}
    where $\gamma_{Sp} (k,\beta)$ is given explicitly in the form of an integral, see \textup{(\ref{gamma_Sp def})}.
\end{theorem}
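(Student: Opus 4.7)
The plan is to adapt the complex-analytic strategy of Bailey and Keating \cite{BK2} from the unitary case to the symplectic one. Since $k$ is a positive integer, Fubini allows one to swap the order of integration and write
$$\text{MoM}_{Sp(2N)}(k,\beta) = \frac{1}{(2\pi)^k} \int_0^{2\pi} \cdots \int_0^{2\pi} I_{k,\beta}(Sp(2N), \underline{\theta}) \, d\theta_1 \cdots d\theta_k,$$
where $I_{k,\beta}(Sp(2N), \underline{\theta}) := \int_{Sp(2N)} \prod_{j=1}^k |P_{Sp(2N)}(\theta_j; A)|^{2\beta}\, dA$ is the symplectic autocorrelation of characteristic polynomials. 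My first step would be to substitute the multiple contour-integral formula for this autocorrelation proved by Conrey et al.~\cite{CFKRS1}, which in the symplectic case involves $2k\beta$ contour variables, a Vandermonde in squared variables, factors $(z_r^2-\theta_s^2)^{2\beta}$ in the denominator, and which exposes the $N$-dependence through an explicit exponential factor $e^{N\sum z_r}$.

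Having reduced matters to a multi-dimensional contour integral in the shift variables integrated against $\underline{\theta}\in[0,2\pi]^k$, I would next extract the leading-order asymptotics. Shrinking each contour to a small circle around the origin and rescaling by $1/N$ exhibits the leading exponential growth and reduces the integrand to a rational form in the rescaled variables. The dominant contribution to the $\theta$-integration comes from the diagonal $\theta_1=\cdots=\theta_k$ where the autocorrelation function is largest; setting $\theta_j = \theta_1 + \phi_j/N$ for $j\geq 2$ and expanding produces an additional $N^{-(k-1)}$ from the change of variables and an $N^{-1}$ from the outer $\theta_1$-average over $[0,2\pi]$. Combined with the $N^{k\beta(2k\beta+1)}$ coming from the coalescent-diagonal growth of the rescaled contour integral, this reproduces the target exponent $k\beta(2k\beta+1)-k$. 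The leading coefficient $\gamma_{Sp}(k,\beta)$ then emerges as a multi-dimensional residue integral in the rescaled contour variables composed with an integral over $(\phi_2,\dots,\phi_k)\in\mathbb{R}^{k-1}$; this becomes the explicit formula \textup{(\ref{gamma_Sp def})}.

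The principal obstacle, I expect, will be the contour bookkeeping and the uniformity of the error term. The symplectic CFKRS integrand is structurally more intricate than its unitary counterpart, since the shift parameters come in $\pm\alpha$ pairs and the $2k\beta$ contour variables produce a larger family of residues with nontrivial sign patterns; enumerating those residues that actually survive the $N\to\infty$ limit, and proving that the remaining ones contribute to the $O(1/N)$ error, requires care. Controlling the error uniformly as the $\theta_j$ approach coincidence (and, because of the symplectic spectral symmetry, also as $\theta_j$ approach $0$ or $\pi$) is the analogue of the most delicate part of \cite{BK2}, and is where I expect the bulk of the technical work to lie. Positivity of $\gamma_{Sp}(k,\beta)$ then follows a posteriori by matching the leading term with $\mathfrak{c}_{Sp}(k,\beta)$ from Theorem \ref{ABK Sp thm}.
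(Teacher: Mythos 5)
Your overall strategy --- Fubini, the CFKRS contour-integral formula for the symplectic autocorrelation, splitting the contours into small circles, discarding the residue configurations that vanish, extracting asymptotics, and finally deducing positivity of $\gamma_{Sp}(k,\beta)$ by comparison with Theorem \ref{ABK Sp thm} --- is the paper's. The genuine gap is in the localization of the $\theta$-integral. You assert that the dominant contribution comes from the diagonal $\theta_1=\cdots=\theta_k$, ``where the autocorrelation function is largest'', and you extract $N^{-(k-1)}$ transverse to the diagonal plus an $N^{-1}$ from the outer $\theta_1$-average. For the symplectic group this identifies the wrong peak. At a generic point of the diagonal ($\theta\neq 0$) only the factors with $\mu_m+\mu_n=\pm(\theta_\sigma-\theta_\tau)$ degenerate, the autocorrelation grows only at the unitary-type rate $N^{(k\beta)^2}$, and the diagonal's contribution to the moment of moments is $O(N^{k^2\beta^2-k+1})$, strictly smaller than $N^{k\beta(2k\beta+1)-k}=N^{2k^2\beta^2+k\beta-k}$. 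The true maximum sits at the single point $\underline{\theta}=\underline{0}$, the symmetry point of the spectrum, where the factors with $\mu_m+\mu_n=\pm(\theta_\sigma+\theta_\tau)$ degenerate as well; that is what produces the peak value $N^{k\beta(2k\beta+1)}$. The paper therefore rescales all $k$ angles at once, $t_m=N\theta_m$, localizing every $\theta_m$ in a window of width $1/N$ about $0$: the measure contributes $N^{-k}$ in one step, and the limiting $t$-integral over $[0,\infty)^k$ is the function $\Psi_{k,\beta}(\underline{v};\underline{l})$ of (\ref{Psi def}) that enters (\ref{gamma_Sp def}). Your bookkeeping lands on the correct exponent only because you insert the peak value $N^{k\beta(2k\beta+1)}$ (attained at the origin, not along the diagonal) while charging an unexplained $N^{-1}$ to the outer average --- a bounded profile averaged over $[0,2\pi]$ gives $O(1)$, not $O(1/N)$. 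Replacing ``diagonal'' by ``symmetry point'' is not cosmetic; it is exactly the structural feature that separates the symplectic computation from the unitary one in \cite{BK2}.

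Two smaller remarks. The contours should be shrunk to small circles about each of the $2k$ poles $\pm i\theta_m$, not about the origin; the paper's Lemma \ref{zero summands} then shows that only configurations placing exactly $2\beta$ circles around each pair $\{i\theta_j,-i\theta_j\}$ survive, which is the residue enumeration you correctly anticipate as an obstacle. You are also right that the behaviour of the integrand as the $\theta_j$ approach the symmetry points requires care: the factors $(1-e^{-z_m-z_n})^{-1}$ are singular there, and controlling which neighbourhoods contribute at leading order, uniformly in the $v$-variables, is where the analytic work of Lemmas \ref{I(Sp) asymp} and \ref{MoM_Sp asymp lemma} concentrates.
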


\begin{theorem} \label{MoM SO thm}
    For $k,\beta\in\mathbb{N}$ with $(k,\beta)\neq (1,1)$,
    
    \begin{equation}
        \textup{MoM}_{SO(2N)} (k,\beta)=\gamma_{SO} (k,\beta) N^{k\beta(2k\beta-1)-k} \left(1+O\left(\tfrac{1}{N}\right)\right),
    \end{equation}
    where $\gamma_{SO} (k,\beta)$ is given explicitly in the form of an integral, see \textup{(\ref{gamma_SO def})}.
\end{theorem}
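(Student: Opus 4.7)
The strategy is to run the complex analytic argument used for Theorem \ref{MoM Sp thm}, but with the orthogonal CFKRS autocorrelation formula in place of the symplectic one. Since $k\in\mathbb{N}$, Fubini gives
\begin{equation*}
\text{MoM}_{SO(2N)}(k,\beta)=\frac{1}{(2\pi)^k}\int_0^{2\pi}\!\!\cdots\!\int_0^{2\pi} I_{k,\beta}(SO(2N),\underline{\theta})\,d\theta_1\cdots d\theta_k,
\end{equation*}
where $I_{k,\beta}(SO(2N),\underline{\theta})=\int_{SO(2N)}\prod_{j=1}^k |P_{SO(2N)}(\theta_j;A)|^{2\beta}\,dA$ is the $2k\beta$-point autocorrelation. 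This autocorrelation is expressed as a multiple contour integral by \cite{CFKRS1}, whose integrand carries the product $\prod_{i\le j}(z_i+z_j)^{-1}$ encoding the orthogonal symmetry, a Vandermonde-type factor, and an exponential $e^{N(z_1+\cdots+z_{2k\beta})}$.

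The next step is to insert this representation into the angular integral, deform each $z$-contour to a small circle about the origin and collect residues at the poles $z_i=\pm i\theta_j$ and along the diagonals $z_i+z_j=0$. Integrating the resulting residue expressions against the oscillatory factors localises the leading contribution to a neighbourhood of $\underline{\theta}=0$; the rescaling $\theta_j\mapsto\theta_j/N$ then separates out the $N$-dependence. Enumerating the residue configurations contributing at top order should produce the advertised power $N^{k\beta(2k\beta-1)-k}$, with prefactor an explicit $k$-fold angular integral of a limiting $z$-contour integral---this is precisely what becomes (\ref{gamma_SO def}).

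The remaining residue configurations and the contribution from the region where the $\theta_j$ are bounded away from $0$ and $2\pi$ are shown to be smaller by a factor $N^{-1}$, yielding the stated error. The main obstacle is the combinatorial bookkeeping of residues, which is substantially more intricate than in the symplectic case because the orthogonal CFKRS integrand contains both $(z_i+z_j)^{-1}$ and $(z_i-z_j)^{-1}$ type factors, whose interaction with the poles at $z_i=\pm i\theta_j$ must be treated carefully. A second delicate issue is verifying convergence of the multi-dimensional integral defining $\gamma_{SO}(k,\beta)$; the exclusion $(k,\beta)\neq(1,1)$ should manifest as a genuine degeneration of the leading residue configuration at that point, consistent with the anomalously low growth $\text{MoM}_{SO(2N)}(1,1)=2(N+1)$ recorded in Theorem \ref{ABK SO thm}.
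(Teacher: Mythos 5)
Your outline follows the same route as the paper: Fubini, the CFKRS representation (\ref{I(SO) def}), deformation of the contours onto small circles about the poles $\pm i\theta_m$ (not about the origin), with Lemma \ref{zero summands} discarding configurations that place more than $2\beta$ contours on a pair $\pm i\theta_j$, and then the rescalings $z_n=\tfrac{v_n}{N}+i\mu_n$ and $t_m=N\theta_m$. The genuine gap is at the end. This analysis by itself produces only $\textup{MoM}_{SO(2N)}(k,\beta)\sim\gamma_{SO}(k,\beta)\,N^{k\beta(2k\beta-1)-k}$, and nothing in your argument shows $\gamma_{SO}(k,\beta)\neq 0$: convergence of the multiple integral in (\ref{gamma_SO def}), which is what you propose to verify, does not rule out cancellation making the coefficient vanish, in which case you would have proved only an upper bound and the asserted formula with relative error $\left(1+O\left(\tfrac{1}{N}\right)\right)$ would not follow. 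The paper closes this by comparison with Theorem \ref{ABK SO thm}: since $\textup{MoM}_{SO(2N)}(k,\beta)$ is already known to be a polynomial in $N$ whose leading term has degree $k\beta(2k\beta-1)-k$ and strictly positive coefficient $\mathfrak{c}_{SO}(k,\beta)$, matching asymptotics gives $\gamma_{SO}(k,\beta)=\mathfrak{c}_{SO}(k,\beta)>0$ together with the $O(1/N)$ error. You need this step, or an independent positivity argument for (\ref{gamma_SO def}), to obtain the theorem as stated.

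You also misread the structure of the orthogonal integrand. Comparing (\ref{I(SO) def}) with (\ref{I(Sp) def}), the only difference from the symplectic case is that the product of $(1-e^{-z_m-z_n})^{-1}$ factors runs over $m<n$ instead of $m\leq n$; there are no $(z_i-z_j)^{-1}$-type denominators, since the factors $(z_m^2-z_n^2)^2$ appear in the numerator through $\Delta(z_1^2,\dots,z_{2k\beta}^2)^2$. So the residue bookkeeping is no more intricate than for $Sp(2N)$; removing the $2k\beta$ diagonal factors is exactly what lowers the exponent from $k\beta(2k\beta+1)-k$ to $k\beta(2k\beta-1)-k$. Relatedly, your reading of the $(1,1)$ exclusion is backwards: $\textup{MoM}_{SO(2N)}(1,1)=2(N+1)$ grows like $N$, which is \emph{faster} than the generic prediction $N^{0}$. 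What fails there is the final $t$-integration: for $k=\beta=1$ and $l_1=1$ the set $\overset{\sim}{\mathcal{T}}_{1,1;\underline{l}}$ is empty and the oscillatory factor is trivial, so the analogue of $\Omega_{k,\beta}$ in (\ref{Omega def}) diverges and that term contributes $N^{|\mathcal{A}_{1,1;\underline{l}}|}=N$, dominating the generic power. This is why $(k,\beta)=(1,1)$ is excluded, and it indicates that the convergence check you flag should be carried out for $\Omega_{k,\beta}(\underline{v};\underline{l})$ across all $\underline{l}$ for the remaining $(k,\beta)$, rather than expecting a degeneration that depresses the growth at $(1,1)$.
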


\subsection{Moments of moments of $L$-functions}

Also considered in \cite{FHK, FK} were the moments of moments of the Riemann zeta function. Analogously to the moments of moments of the characteristic polynomials, these consist of an average first over a short piece of the critical line and then an average over these intervals. Specifically, the moments of moments of $\zeta(s)$ are defined for $T>0$ and $\text{Re}(\beta)>-1/2$ by \footnote{In \cite{FHK, FK}, the intervals being averaged over were of length $2\pi$ rather than 1. However, in \cite{BK3}, the intervals were taken to be of length 1 for convenience. The analysis of \cite{BK3} holds for any interval that is $O(1)$ as $T\to\infty$.}

\begin{equation}
    \text{MoM}_{\zeta_T} (k,\beta):=\frac{1}{T}\int_0^T \left(\int_t^{t+1}|\zeta(\tfrac{1}{2}+ih)|^{2\beta} dh\right)^k dt.
\end{equation}
These moments of moments are also linked to the extreme values taken by the Riemann zeta function and in \cite{FK}, a conjecture for the local maximum on short intervals was put forward. There has been significant progress on this conjecture, see for example, \cite{ABBRS, ABR, Harper, Najnudel}.

Bailey and Keating \cite{BK3}, using the philosophy that the Riemann zeta function on the critical line can be modelled by the characteristic polynomial of random unitary matrices and theorem \ref{BK thm}, made the following conjecture.

\begin{conjecture} [Bailey-Keating \cite{BK3}] \label{BK conj}
For $k,\beta\in\mathbb{N}$,

\begin{equation}
    \textup{MoM}_{\zeta_T} (k,\beta)=\alpha(k,\beta) c(k,\beta) \left(\log\tfrac{T}{2\pi}\right)^{k^2\beta^2-k+1} \left(1+O_{k,\beta} \left(\log^{-1} T\right)\right),
\end{equation}
where $c(k,\beta)$ is the same coefficient appearing in \textup{(\ref{BK result})}, and $\alpha(k,\beta)$ contains the arithmetic information in the form of an Euler product.
\end{conjecture}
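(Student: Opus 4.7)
The plan is to emulate, on the number-theoretic side, the complex analytic argument of \cite{BK2}. Since $k \in \mathbb{N}$, Fubini gives
\begin{equation*}
\textup{MoM}_{\zeta_T}(k,\beta) = \int_{[0,1]^k} \frac{1}{T}\int_0^T \prod_{j=1}^k |\zeta(\tfrac{1}{2} + i(t + x_j))|^{2\beta}\, dt\, d\vec{x},
\end{equation*}
reducing the problem to the $[0,1]^k$-average of a shifted $2k\beta$-th moment of $\zeta$, the zeta analogue of the autocorrelation function $I_{k,\beta}(U(N), \underline{\theta})$ of \cite{BK2}.

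For the inner $t$-average, the idea is to apply the CFKRS shifted moments conjecture. As $\beta \in \mathbb{N}$, the integrand is realised as the confluent limit of
\begin{equation*}
\prod_{j=1}^k \prod_{\ell=1}^\beta \zeta(\tfrac{1}{2} + \alpha_{j,\ell} + i(t + x_j))\, \zeta(\tfrac{1}{2} + \gamma_{j,\ell} - i(t + x_j))
\end{equation*}
with small generic shifts $\alpha_{j,\ell}, \gamma_{j,\ell}$. The CFKRS recipe yields a main term that is a sum over ``swaps'' of the shifts, each summand factoring into an arithmetic Euler part, some $(T/2\pi)$-power factors, and a random-matrix-like piece built from $\zeta(1 + \text{shift differences})$. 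Taking the confluent limit via the standard contour integral device of \cite{CFKRS1}, the Euler factor pulls out as (a function tending to) $\alpha(k,\beta)$, independent of $\vec{x}$, while the remaining object is essentially $I_{k,\beta}(U(N),\underline{\theta})$ with $N = \log(T/2\pi)$ and $\theta_j = (x_j - x_1)\log(T/2\pi)$.

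Substituting back into the $\vec{x}$-integral, the integrand depends only on the differences $x_j - x_{j'}$ and is concentrated on the mesoscopic scale $1/\log(T/2\pi)$. Rescaling $u_j = (x_j - x_1) \log(T/2\pi)$ for $j = 2,\dots,k$ gives a Jacobian of $(\log(T/2\pi))^{-(k-1)}$, the $x_1$-integration contributes $1$, and the $\vec{u}$-integral reproduces precisely the random matrix calculation of \cite{BK2} that yields $c(k,\beta)$ times $(\log(T/2\pi))^{k^2\beta^2}$. Multiplying these together gives the claimed asymptotic $\alpha(k,\beta)\, c(k,\beta)\, (\log(T/2\pi))^{k^2\beta^2 - k + 1}$.

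The principal difficulty is the uniformity of the CFKRS conjecture in the shifts. To rescale $x_j$ on the scale $1/\log T$ requires the error term to remain a power saving in $T$ uniformly as the shift differences $i(x_j - x_{j'})$ tend to zero, where individual terms in the swap sum blow up through the poles of $\zeta(1 + i(x_j - x_{j'}))$. Verifying that these poles cancel in exactly the way predicted by the random matrix analysis, and that the contribution from the bulk region $|x_j - x_{j'}| \gg 1/\log T$ is of strictly smaller order than the mesoscopic contribution, is the core of the argument, and is precisely what converts the CFKRS conjecture into conjecture \ref{BK conj}.
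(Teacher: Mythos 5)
Your proposal is correct in approach and is essentially the route the paper describes: this statement is a conjecture (so only a conditional derivation is possible), and the paper attributes exactly your argument --- swap the order of integration by Fubini, insert the shifted-moments recipe of \cite{CFKRS2}, take the confluent limit so the Euler factor separates off as $\alpha(k,\beta)$ and the remainder matches $I_{k,\beta}(U(N),\underline{\theta})$ with $N=\log\frac{T}{2\pi}$, then rescale the offsets to the $1/\log T$ scale to recover $c(k,\beta)$ and the exponent $k^2\beta^2-k+1$ --- to \cite{BK3}, and carries out the same analysis itself for the symplectic and orthogonal families. You also correctly isolate the genuine technical content (uniformity and pole cancellation as the shifts coalesce), which in \cite{BK3} and in this paper is handled by proving the asymptotics unconditionally for the proxy function supplied by the \cite{CFKRS2} conjecture.
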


It was then proven in \cite{BK3} that conjecture \ref{BK conj} follows from the conjecture of Conrey et al. \cite{CFKRS2} for the shifted moments of the zeta function. Explicitly, they prove that a function which, according to the conjecture of \cite{CFKRS2} approximates $\text{MoM}_{\zeta_T} (k,\beta)$ up to a power saving in $T$, does indeed behave asymptotically as conjecture \ref{BK conj} predicts $\text{MoM}_{\zeta_T} (k,\beta)$ does. The proof is similar to that in \cite{BK2} due to the similarity of the integral expressions for the shifted moments of the unitary characteristic polynomials and the Riemann zeta function.

Finally, Bailey and Keating \cite{BK3} also considered the moments of moments of families $L$-functions with symplectic or orthogonal symmetry. For each of the symmetry types, the moments of moments consist of an average over a short interval near the symmetry point and then an average through the family. Using theorems \ref{ABK Sp thm} and \ref{ABK SO thm}, Bailey and Keating made conjectures for the asymptotic growth of the moments of moments of these families in the same spirit as that of conjecture \ref{BK conj}. Following the proof of both theorems \ref{MoM Sp thm} and \ref{MoM SO thm} in sections 2 and 3, we will look at the examples of a symplectic and orthogonal family of $L$-functions considered in \cite{BK3} and show that the corresponding conjectures of Bailey and Keating also follow from the shifted moments conjecture of \cite{CFKRS2}.

\section{The symplectic group $Sp(2N)$}

In this section we will prove theorem \ref{MoM Sp thm}. The argument follows that used in \cite{BK2} and makes use of the complex analytic techniques deployed in \cite{KO} and \cite{KRRR}. The eigenvalues of matrices in $Sp(2N)$ lie on the unit circle and come in complex conjugate pairs $e^{i\phi_1}, e^{-i\phi_1}, e^{i\phi_2}, e^{-i\phi_2},\dots, e^{i\phi_N}, e^{-i\phi_N}$. Hence,

\begin{equation}\label{ref 1}
    \overline{P_{Sp(2N)} (\theta; A)}=P_{Sp(2N)} (-\theta; A).
\end{equation}

For $k,\beta\in\mathbb{N}$, we can change the order of integration by Fubini's theorem and use (\ref{ref 1}) to see that

\begin{equation}
    \text{MoM}_{Sp(2N)} (k,\beta)=\frac{1}{(2\pi)^k} \int_0^{2\pi}\cdots\int_0^{2\pi} I_{k,\beta} (Sp(2N),\theta_1,\dots,\theta_k)\, d\theta_1\cdots d\theta_k,
\end{equation}
where

\begin{equation}
    I_{k,\beta} (Sp(2N),\underline{\theta}):=\int_{Sp(2N)} \prod_{j=1}^k P_{Sp(2N)} (\theta_j; A)^{\beta} P_{Sp(2N)} (-\theta_j; A)^{\beta} dA.
\end{equation}
The autocorrelation function $I_{k,\beta} (Sp(2N),\underline{\theta})$ was also calculated by Conrey et al. \cite{CFKRS1}. In particular, it can written in the form of a multiple contour integral as 

\begin{align} \label{I(Sp) def}
    I_{k,\beta} (Sp(2N),\underline{\theta})=& \frac{(-1)^{k\beta} 2^{2k\beta}}{(2\pi i)^{2k\beta} (2k\beta)!} \oint\cdots\oint \prod_{1\leq m\leq n\leq 2k\beta} \left(1-e^{-z_m-z_n}\right)^{-1} \nonumber \\
    &\qquad \times\frac{\Delta(z_1^2,\dots,z_{2k\beta}^2)^2 \prod_{n=1}^{2k\beta} z_n}{\prod_{n=1}^{2k\beta} \prod_{m=1}^k (z_n-i\theta_m)^{2\beta} (z_n+i\theta_m)^{2\beta}} e^{N \sum_{n=1}^{2k\beta} z_n} dz_1\cdots dz_{2k\beta},
\end{align}
where $\Delta(z_1,\dots,z_n)=\prod_{i<j} (z_j-z_i)$ is the Vandermonde determinant and the contours encircle the poles at $\pm i\theta_m$ for $1\leq m\leq k$.

Each of the $2k\beta$ contours in (\ref{I(Sp) def}) can be deformed into $2k$ small circles around each of the poles $\pm i\theta_m$ with connecting straight lines whose contributions will cancel out. The multiple integral $I_{k,\beta} (Sp(2N),\underline{\theta})$ can therefore be written as a sum of $(2k)^{2k\beta}$ integrals. For $j\in\{\pm 1,\dots,\pm k\}$, let $C_j$ denote a small circular contour around $i\theta_j$ if $j>0$ and a small circular contour around $-i\theta_{-j}$ if $j<0$. Then we have that

\begin{equation} \label{ref 2}
    I_{k,\beta} (Sp(2N),\underline{\theta})=\frac{(-1)^{k\beta} 2^{2k\beta}}{(2\pi i)^{2k\beta} (2k\beta)!} \sum_{\varepsilon_j \in\{\pm 1,\dots,\pm k\}} J_{k,\beta} (\underline{\theta};\varepsilon_1,\dots,\varepsilon_{2k\beta}),
\end{equation}
where

\begin{align}
    J_{k,\beta} (\underline{\theta};\underline{\varepsilon})=& \int_{C_{\varepsilon_{2k\beta}}} \cdots\int_{C_{\varepsilon_1}} \prod_{1\leq m\leq n\leq 2k\beta} \left(1-e^{-z_m-z_n}\right)^{-1} \nonumber \\
    &\qquad \times\frac{\Delta(z_1^2,\dots,z_{2k\beta}^2)^2 \prod_{n=1}^{2k\beta} z_n}{\prod_{n=1}^{2k\beta} \prod_{m=1}^k (z_n-i\theta_m)^{2\beta} (z_n+i\theta_m)^{2\beta}} e^{N \sum_{n=1}^{2k\beta} z_n} dz_1\cdots dz_{2k\beta}.
\end{align}

Many of the summands in (\ref{ref 2}) are in fact zero as the following lemma demonstrates.

\begin{lemma} \label{zero summands}
For a choice of contours $\underline{\varepsilon}=(\varepsilon_1,\dots,\varepsilon_{2k\beta})$ in \textup{(\ref{ref 2})} and for $j\in\{1,\dots,k\}$, let $m_j$ and $n_j$ be the number of occurrences of $j$ and $-j$ respectively in $\underline{\varepsilon}$. Then, if $m_j+n_j>2\beta$ for some $j$, we have that $J_{k,\beta} (\underline{\theta};\underline{\varepsilon})$ is identically zero.
\end{lemma}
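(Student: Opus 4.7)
The plan is to compute $J_{k,\beta}(\underline{\theta};\underline{\varepsilon})$ as a multivariate residue at the cluster of poles $\pm i\theta_j$ (for the offending $j$) and show that this residue vanishes by a degree count on the Vandermonde-squared factor. I would introduce local coordinates $z_n = i\theta_j + w_n$ for each $n$ with $\varepsilon_n = j$ and $z_n = -i\theta_j + u_n$ for each $n$ with $\varepsilon_n = -j$, so that each contour becomes a small circle around the origin in the corresponding variable and each local variable has a pole of order $2\beta$ at the origin coming from the factor $(z_n \mp i\theta_j)^{2\beta}$ in the original denominator. Variables belonging to other clusters (and all factors that do not couple to this one) may be treated as fixed parameters, so it suffices to show that the inner residue over $w_1,\dots,w_{m_j},u_1,\dots,u_{n_j}$ vanishes for every fixed value of these parameters.

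Next I would track the order of vanishing of the numerator at the origin. Writing $\Delta(z_1^2,\dots,z_{2k\beta}^2)^2 = \prod_{i<j}(z_j-z_i)^2(z_j+z_i)^2$, each pair of variables in the cluster contributes an order-$2$ zero: $(w_a-w_b)^2$ for pairs in $S_j$, $(u_a-u_b)^2$ for pairs in $T_j$, and $(w_a+u_b)^2$ for cross pairs, giving a combined order of vanishing of $2\binom{s}{2}=s(s-1)$ with $s=m_j+n_j$. The remaining factors $(1-e^{-z_m-z_n})^{-1}$ (with contours arranged to avoid the loci $z_m+z_n=0$), $e^{N\sum z_n}$, $\prod_n z_n$ and $(z_n \mp i\theta_m)^{2\beta}$ for $m\neq j$ are analytic and non-vanishing at the origin for generic $\underline{\theta}$. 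The $s$-variable residue extracts the coefficient of $\prod_a w_a^{2\beta-1}\prod_b u_b^{2\beta-1}$, a monomial of total degree $s(2\beta-1)$; since the numerator's minimum total degree is $s(s-1)\geq s\cdot 2\beta > s(2\beta-1)$ whenever $s>2\beta$, this coefficient must vanish, giving $J_{k,\beta}(\underline{\theta};\underline{\varepsilon})=0$.

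The main obstacle is justifying that the contours can really be chosen to steer around the singular loci $\{z_m+z_n=0\}$ of the exponential factors for cross pairs. If these loci force a contribution---via an iterated-residue evaluation---the simple pole of $(1-e^{-z_m-z_n})^{-1}$ would cancel one factor of $(w_a+u_b)$ per cross pair, decreasing the effective order of vanishing by $m_j n_j$; the boundary cases in which $m_jn_j\geq s(s-2\beta)$ would then no longer be ruled out by the naive degree comparison and would require a more delicate argument based on the antisymmetry of the Vandermonde in the combined variables $(w_1,\dots,w_{m_j},-u_1,\dots,-u_{n_j})$.
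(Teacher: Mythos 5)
There is a genuine gap, and it is exactly the one you flag at the end without resolving. The polar loci $\{w_a+u_b=0\}$ of the cross-pair factors $(1-e^{-z_a-z_b})^{-1}$ pass through the centre of the polydisc over which you are taking residues, so these factors are not analytic there and the residue cannot be read off as a Taylor coefficient of the numerator; evaluating by iterated residues, the pole at $w_a=-u_b$ lies inside the $w_a$-circle once $|u_b|$ is small, so it genuinely contributes and cancels one of the two factors of $(w_a+u_b)$ per cross pair. The effective order of vanishing is therefore $s(s-1)-m_jn_j$, and the comparison with $s(2\beta-1)$ fails precisely when $m_jn_j\geq s(s-2\beta)$, which does happen (e.g.\ $\beta=2$, $s=5$, $m_j=2$, $n_j=3$ gives $m_jn_j=6\geq 5$). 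Since the total degree is invariant under any change of variables, no rearrangement of the cluster rescues the \emph{total}-degree count; the argument as it stands does not prove the lemma.

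The paper closes this gap with two modifications you should adopt. First, reflect $z\mapsto -z$ in the $m_j$ variables encircling $+i\theta_j$, so that the whole cluster sits at $-i\theta_j$: then every sum $z_a+z_b$ within the cluster is close to $-2i\theta_j$ and the factors $1-e^{-z_a-z_b}$ stay away from their zeros; the reflection instead produces factors $(1-e^{z_a-z_b})^{-1}$ with \emph{simple} poles on the diagonal $z_a=z_b$, which are cancelled by the double zeros $(z_b-z_a)^2$ of $\Delta(z_1^2,\dots,z_{2k\beta}^2)^2$. Second, and crucially, replace the total-degree count by a per-variable one: absorb one copy of $\Delta$ together with all the $(1-e^{\cdot})^{-1}$ factors into a prefactor $G$ that is now analytic at the cluster point, expand the remaining copy $\Delta\bigl((z_n-i\theta_j)^2\bigr)=\sum_{\sigma}\mathrm{sgn}(\sigma)\prod_n(z_n^2-2i\theta_j z_n)^{\sigma(n)-1}$ over permutations, and observe by pigeonhole that each term has $z_n$-degree at least $2\beta$ in at least one of the $s=2\beta+1$ cluster variables. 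Multiplication by the analytic $G$ preserves this, so the coefficient of $\prod_n z_n^{2\beta-1}$ vanishes variable by variable. This per-variable bound is strictly stronger than the total-degree bound and is what makes the boundary cases go through.
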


\begin{proof}
The proof is similar to that of lemma 3.2 in \cite{BK2} and lemma 4.11 in \cite{KRRR} but we include it for the sake of completeness. Suppose without loss of generality that $m_1+n_1=2\beta+1$ and that

\begin{equation}
    \underline{\varepsilon}=(\underbrace{1,\dots,1}_{m_1},\underbrace{-1,\dots,-1}_{2\beta+1-m_1},\underbrace{2,\dots,2}_{\beta-1},\underbrace{-2,\dots,-2}_{\beta},\underbrace{3,\dots,3}_{\beta},\underbrace{-3,\dots,-3}_{\beta},\dots,\underbrace{k,\dots,k}_{\beta},\underbrace{-k,\dots,-k}_{\beta}).
\end{equation}
All other cases can be proven in the same way. For simplicity, we assume that $m_1=0$. If this is not the case, then to $J_{k,\beta} (\underline{\theta};\underline{\varepsilon})$, we would make the change of variables $z_j\mapsto -z_j$ for $1\leq j\leq m_1$ and the same argument applies.

Making the change of variables $z_j\mapsto z_j-i\theta_1$, the integrand of $J_{k,\beta} (\underline{\theta};\underline{\varepsilon})$ is then

\begin{equation}
    \frac{G(z_1,\dots,z_{2\beta+1}) \Delta\left((z_1-i\theta_1)^2,\dots,(z_{2k\beta}-i\theta_1)^2 \right) dz_1\cdots dz_{2k\beta}}{\prod_{n=1}^{2\beta+1} z_n^{2\beta}},
\end{equation}
where

\begin{align}
    & G(z_1,\dots,z_{2\beta+1})= \nonumber \\
    & \frac{\prod_{1\leq m\leq n\leq 2k\beta} \left(1-e^{-z_m-z_n+2i\theta_1}\right)^{-1} \Delta\left((z_1-i\theta_1)^2,\dots,(z_{2k\beta}-i\theta_1)^2 \right) \prod_{n=1}^{2k\beta} (z_n-i\theta_1) e^{N \sum_{n=1}^{2k\beta} (z_n-i\theta_1)}}{\prod_{n=1}^{2k\beta} \prod_{m=2}^k \left(z_n-i(\theta_1+\theta_m)\right)^{2\beta} \left(z_n+i(\theta_m-\theta_1)\right)^{2\beta} \prod_{n=1}^{2k\beta} (z_n-2i\theta_1)^{2\beta} \prod_{n=2\beta+2}^{2k\beta} z_n^{2\beta}}
\end{align}
is analytic around the origin. The idea now is to show that the coefficient of $\prod_{n=1}^{2\beta+1} z_n^{-1}$ in the integrand of $J_{k,\beta} (\underline{\theta};\underline{\varepsilon})$ is zero and hence by the residue theorem, so is the integral. We have seen that $G(z_1,\dots,z_{2\beta+1})$ is analytic around zero and we can write the Vandermonde as

\begin{align}
    \Delta\left((z_1-i\theta_1)^2,\dots,(z_{2k\beta}-i\theta_1)^2 \right)=& \Delta\left((z_1^2-2i\theta_1 z_1),\dots,(z_{2k\beta}^2-2i\theta_1 z_{2k\beta})\right) \nonumber \\
    =& \sum_{\sigma\in S_{2k\beta}} \text{sgn}(\sigma) \prod_{n=1}^{2k\beta} (z_n^2-2i\theta_1 z_n)^{\sigma(n)-1}.
\end{align}
For each permutation $\sigma\in S_{2k\beta}$, we must have $\sigma(n)-1\geq 2\beta$ for at least one $n\in\{1,2,\dots,2\beta+1\}$. It follows that there are no terms in the expansion of the Vandermonde of the form $\prod_{n=1}^{2\beta+1} z_n^{a(n)}$ with $a(n)\leq 2\beta-1$. Thus, as $G(z_1,\dots,z_{2\beta+1})$ is analytic around zero, the coefficient of $\prod_{n=1}^{2\beta+1} z_n^{-1}$ in the integrand of $J_{k,\beta} (\underline{\theta};\underline{\varepsilon})$ is zero which completes the proof.
\end{proof}

Lemma \ref{zero summands} implies that the non-zero summands in (\ref{ref 2}) are given by those $\underline{\varepsilon}$ for which $m_j+n_j=2\beta$ for all $j$. This and the fact that the integrand of $J_{k,\beta} (\underline{\theta};\underline{\varepsilon})$ is a symmetric function of $z_1,\dots,z_{2k\beta}$, means we can rewrite (\ref{ref 2}) as

\begin{equation} \label{ref 6}
    I_{k,\beta} (Sp(2N),\underline{\theta})=\frac{(-1)^{k\beta} 2^{2k\beta}}{(2\pi i)^{2k\beta} (2k\beta)!} \sum_{l_1=0}^{2\beta} \cdots \sum_{l_k=0}^{2\beta} c_{\underline{l}} (k,\beta) J_{k,\beta} (\underline{\theta};l_1,\dots,l_k),
\end{equation}
where $J_{k,\beta} (\underline{\theta};\underline{l})$ is defined to be $J_{k,\beta} (\underline{\theta};\underline{\varepsilon})$ with $\underline{\varepsilon}$ given by

\begin{equation}
    \underline{\varepsilon}=(\underbrace{1,\dots,1}_{l_1},\underbrace{-1,\dots,-1}_{2\beta-l_1},\underbrace{2,\dots,2}_{l_2},\underbrace{-2,\dots,-2}_{2\beta-l_2},\dots,\underbrace{k,\dots,k}_{l_k},\underbrace{-k,\dots,-k}_{2\beta-l_k}),
\end{equation}
and

\begin{equation}\label{c(k,beta) def}
    c_{\underline{l}}(k,\beta)=\binom{2k\beta}{l_1}\binom{2k\beta-l_1}{2\beta-l_1}\binom{(2k-2)\beta}{l_2}\binom{(2k-2)\beta-l_2}{2\beta-l_2}\dotsi\binom{2\beta}{l_k}
\end{equation}
counts the number of ways in which $\underline{\varepsilon}$ can be permuted. The next lemma determines the asymptotic behaviour of $ I_{k,\beta} (Sp(2N),\underline{\theta})$.

\begin{lemma} \label{I(Sp) asymp}
As $N\to\infty$, we have

\begin{align}
    I_{k,\beta} (Sp(2N),\underline{\theta})\sim& \sum_{l_1,\dots,l_k=0}^{2\beta} \frac{(-1)^{k\beta+\sum_{m=1}^k l_m} c_{\underline{l}} (k,\beta)}{(2\pi i)^{2k\beta} (2k\beta)!} N^{|\mathcal{A}_{k,\beta;\underline{l}}|} e^{iN \sum_{n=1}^{2k\beta} \mu_n} \nonumber \\
    &\qquad \times\int_{C_0}\cdots\int_{C_0} \prod_{\substack{1\leq m\leq n\leq 2k\beta \\ \mu_m+\mu_n\neq0}}\left(1-e^{-\frac{(v_m+v_n)}{N}-i(\mu_m+\mu_n)}\right)^{-1} f(\underline{v};\underline{l}) \prod_{n=1}^{2k\beta} dv_n,
\end{align}
where $C_0$ denotes a small circular contour around the origin. The set $\mathcal{A}_{k,\beta;\underline{l}}$ and the function $f(\underline{v};\underline{l})$ are defined in the proof, see \textup{(\ref{set A def})} and \textup{(\ref{f(v) def})} respectively. Also, the $\mu_n$ are defined in terms of the $\theta_m$ in \textup{(\ref{mu def})}.
\end{lemma}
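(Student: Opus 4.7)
The plan is to estimate each summand $J_{k,\beta}(\underline{\theta};\underline{l})$ in (\ref{ref 6}) asymptotically and then assemble the result. For the summand indexed by $(l_1,\ldots,l_k)$ introduce the vector $\underline{\mu}=(\mu_1,\ldots,\mu_{2k\beta})$ that records which pole the $n$-th contour encircles: $\mu_n = +\theta_1$ for $1\leq n\leq l_1$, $\mu_n = -\theta_1$ for $l_1<n\leq 2\beta$, $\mu_n = +\theta_2$ for the next $l_2$ indices, and so on. The strategy is then twofold: first translate each contour to the origin via $z_n \mapsto z_n + i\mu_n$, then rescale $z_n = v_n/N$ so that each contour is a fixed small circle $C_0$ about $0$ in the $v$-variable and the exponential $e^{N\sum_n z_n}$ becomes $e^{iN\sum_n \mu_n}\prod_n e^{v_n}$, producing the claimed overall phase.

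The next task is to isolate the various sources of powers of $N$. The Jacobian of the rescaling supplies $N^{-2k\beta}$. For each $n$, exactly one of the $2k$ pole factors $(z_n\mp i\theta_m)^{-2\beta}$ in the denominator vanishes at $i\mu_n$, becoming $(v_n/N)^{-2\beta}$, which contributes $N^{4k\beta^2}$ in total, while the remaining pole factors tend to nonzero constants. For each pair $m\leq n$ with $\mu_m+\mu_n=0$ the factor $(1-e^{-z_m-z_n})^{-1}$ reduces to $(1-e^{-(v_m+v_n)/N})^{-1}\sim N/(v_m+v_n)$, supplying one extra factor of $N$ per such pair; the remaining factors with $\mu_m+\mu_n\neq 0$ tend to the bounded expression $(1-e^{-(v_m+v_n)/N-i(\mu_m+\mu_n)})^{-1}$ which is exactly the product retained in the lemma's statement. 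Finally, the Vandermonde $\Delta(z_1^2,\ldots,z_{2k\beta}^2)^2$ contributes a factor of $N^{-2}$ for every pair $i<j$ with $\mu_i^2=\mu_j^2$, since there $z_j^2-z_i^2$ is of order $1/N$; all other Vandermonde factors, together with $\prod_n z_n$, tend to finite nonzero limits.

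Collecting these contributions, one defines $\mathcal{A}_{k,\beta;\underline{l}}$ in (\ref{set A def}) to encode the net power of $N$ produced by the competition between the pole product, the resonant exponential factors and the Vandermonde, and one defines $f(\underline{v};\underline{l})$ in (\ref{f(v) def}) as the limit of the remaining $N$-independent part of the integrand. The asymptotic formula then follows by interchanging the limit with the $2k\beta$ contour integrals, which is justified because the integrand is uniformly bounded (after extraction of the explicit $N^{|\mathcal{A}_{k,\beta;\underline{l}}|}$ factor) on the fixed circles $C_0$.

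The main obstacle is the careful $N$-power bookkeeping: the three sources of $N$-scaling interact, and one must check that the apparent singularities of $f(\underline{v};\underline{l})$ at the origin — the $v_n^{-2\beta}$ from the matched poles and the $(v_m+v_n)^{-1}$ from the resonant exponentials — are compatible with the vanishing of the Vandermonde on pairs with coinciding $\mu$'s, so that $f(\underline{v};\underline{l})$ is meromorphic with only isolated poles at $v_n=0$ and the remaining integral around $C_0$ is well defined and nonzero. Once this is verified, plugging the asymptotic of each $J_{k,\beta}(\underline{\theta};\underline{l})$ back into (\ref{ref 6}) and collecting the combinatorial prefactors $c_{\underline{l}}(k,\beta)$ yields the claimed expression.
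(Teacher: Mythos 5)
Your proposal follows essentially the same route as the paper: the substitution $z_n = v_n/N + i\mu_n$, the same three sources of powers of $N$ (the Jacobian, the matched pole factors $(v_n/N)^{-2\beta}$, and the resonant $(1-e^{-z_m-z_n})^{-1}$ factors set against the degenerate Vandermonde pairs with $\mu_m^2=\mu_n^2$), and the same identification of the exponent $|\mathcal{A}_{k,\beta;\underline{l}}|$ and of $f(\underline{v};\underline{l})$ as the $N$-independent limit. The only detail you leave implicit is the verification that the $\theta$-dependent nonzero limits of the unmatched pole factors cancel exactly against those of the non-degenerate Vandermonde factors and $\prod_n z_n$, leaving the $\theta$-independent constant $(-1)^{\sum_m l_m}2^{-2k\beta}$ that yields the sign in the stated formula; the paper carries this cancellation out explicitly in (\ref{J integrand}).
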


\begin{proof}

In view of (\ref{ref 6}),  we focus on $J_{k,\beta} (\underline{\theta};\underline{l})$. For a given $\underline{l}$, we make the change of variables

\begin{equation}
    z_n=\frac{v_n}{N}+i\mu_n,
\end{equation}
where

\begin{align} \label{mu def}
    \mu_n=\begin{cases} \theta_1, &\text{if } 1\leq n\leq l_1 \\
                        -\theta_1, &\text{if } l_1+1\leq n\leq 2\beta \\
                        \theta_2, &\text{if } 2\beta+1\leq n\leq 2\beta+l_2 \\
                        -\theta_2, &\text{if } 2\beta+l_2+1\leq n\leq 4\beta \\
                        \vdots &\vdots \\
                        \theta_k, &\text{if } (2k-2)\beta+1\leq n\leq (2k-2)\beta+l_k \\
                        -\theta_k, &\text{if } (2k-2)\beta+l_k+1\leq n\leq 2k\beta. \end{cases}
\end{align}
The contours of integration are then all small circles around the origin. Now, since the Laurent expansion of $(1-e^{-s})^{-1}$ about its pole at $s=0$ is

\begin{equation}
    (1-e^{-s})^{-1}=\frac{1}{s}+O(1),
\end{equation}
the integrand of $J_{k,\beta} (\underline{\theta};\underline{l})$ as $N\to\infty$ is then

\begin{align} \label{J integrand}
    & \prod_{1\leq m\leq n\leq 2k\beta} \left(1-e^{-\frac{(v_m+v_n)}{N}-i(\mu_m+\mu_n)}\right)^{-1} \Delta\left(\left(\tfrac{v_1}{N}+i\mu_1\right)^2,\dots,\left(\tfrac{v_{2k\beta}}{N}+i\mu_{2k\beta}\right)^2 \right)^2 \nonumber \\
    &\qquad \times\frac{\prod_{n=1}^{2k\beta}\big(\tfrac{v_n}{N}+i\mu_n\big)}{\prod_{n=1}^{2k\beta}\prod_{m=1}^k\big(\tfrac{v_n}{N}+i(\mu_n-\theta_m)\big)^{2\beta}\big(\tfrac{v_n}{N}+i(\mu_n+\theta_m)\big)^{2\beta}} e^{\sum_{n=1}^{2k\beta}v_n} e^{iN\sum_{n=1}^{2k\beta}\mu_n} \prod_{n=1}^{2k\beta}\frac{dv_n}{N} \nonumber \\
    =& \left(1+O\left(\tfrac{1}{N}\right)\right) N^{-2k\beta} \prod_{\substack{1\leq m\leq n\leq 2k\beta\\ \mu_m+\mu_n\neq 0}} \left(1-e^{-\frac{(v_m+v_n)}{N}-i(\mu_m+\mu_n)}\right)^{-1} e^{\sum_{n=1}^{2k\beta}v_n} e^{iN\sum_{n=1}^{2k\beta}\mu_n} \nonumber \\
    &\qquad \times\prod_{n=1}^{2k\beta}(i\mu_n) \frac{\prod_{\substack{1\leq m<n\leq2k\beta\\ \mu_m+\mu_n\neq0}}(i\mu_m+i\mu_n)^2 \prod_{\substack{1\leq m<n\leq2k\beta\\ \mu_m-\mu_n\neq 0}}(i\mu_m-i\mu_n)^2}{\prod_{\substack{1\leq m<n\leq2k\beta\\ \mu_m+\mu_n=0}}\Big(\frac{N}{v_n+v_m}\Big) \prod_{\substack{1\leq m<n\leq2k\beta\\ \mu_m-\mu_n=0}}\Big(\frac{N}{v_n-v_m}\Big)^2} \nonumber \\
    &\qquad \times\frac{\prod_{n=1}^{2k\beta}\big(\tfrac{v_n}{N}\big)^{-2\beta}}{\underset{\mu_n-\theta_m\neq0}{\prod_{n=1}^{2k\beta}\prod_{m=1}^k}(i\mu_n-i\theta_m)^{2\beta}\underset{\mu_n+\theta_m\neq0}{\prod_{n=1}^{2k\beta}\prod_{m=1}^k}(i\mu_n+i\theta_m)^{2\beta}} \prod_{n=1}^{2k\beta}dv_n \nonumber \\
    =& \left(1+O\left(\tfrac{1}{N}\right)\right) (-1)^{k\beta} N^{4k\beta^2-2k\beta} \prod_{\substack{1\leq m\leq n\leq 2k\beta \\ \mu_m+\mu_n\neq0}}\left(1-e^{-\frac{(v_m+v_n)}{N}-i(\mu_m+\mu_n)}\right)^{-1} e^{\sum_{n=1}^{2k\beta}v_n} e^{iN\sum_{n=1}^{2k\beta}\mu_n} \nonumber \\
    &\qquad \times\prod_{n=1}^{2k\beta}\mu_n \frac{\prod_{\substack{m<n\\ \mu_m^2=\mu_n^2}}(2i\mu_n)^2 \prod_{\substack{m<n\\ \mu_m^2\neq\mu_n^2}}(\mu_m^2-\mu_n^2)^2}{\prod_{\substack{1\leq m<n\leq2k\beta\\ \mu_m+\mu_n=0}}\Big(\frac{N}{v_n+v_m}\Big) \prod_{\substack{1\leq m<n\leq2k\beta\\ \mu_m-\mu_n=0}}\Big(\frac{N}{v_n-v_m}\Big)^2} \nonumber \\
    &\qquad \times\frac{1}{\underset{\mu_n^2=\theta_m^2}{\prod_{n=1}^{2k\beta}\prod_{m=1}^k}(2i\theta_m)^{2\beta} \underset{\mu_n^2\neq \theta_m^2}{\prod_{n=1}^{2k\beta}\prod_{m=1}^k}(\mu_n^2-\theta_m^2)^{2\beta}} \prod_{n=1}^{2k\beta}\frac{dv_n}{v_n^{2\beta}} \nonumber \\
    =& \left(1+O\left(\tfrac{1}{N}\right)\right) \frac{(-1)^{\sum_{m=1}^k l_m}}{2^{2k\beta}} N^{4k\beta^2-2k\beta} \prod_{\substack{1\leq m\leq n\leq 2k\beta \\ \mu_m+\mu_n\neq0}}\left(1-e^{-\frac{(v_m+v_n)}{N}-i(\mu_m+\mu_n)}\right)^{-1} e^{\sum_{n=1}^{2k\beta}v_n} e^{iN\sum_{n=1}^{2k\beta}\mu_n} \nonumber \\
    &\qquad \times\prod_{\substack{1\leq m<n\leq 2k\beta\\ \mu_m+\mu_n=0}} \left(\frac{v_n+v_m}{N}\right) \prod_{\substack{1\leq m<n\leq 2k\beta\\ \mu_m-\mu_n=0}} \left(\frac{v_n-v_m}{N}\right)^2 \prod_{n=1}^{2k\beta} \frac{dv_n}{v_n^{2\beta}}.
\end{align}

The power of $N$ coming from the products in the second line of (\ref{J integrand}) is determined by the size of the following sets:

\begin{equation}\label{set A def}
    \mathcal{A}_{k,\beta;\underline{l}}:=\{(m,n):1\leq m<n\leq 2k\beta, \mu_m+\mu_n=0\},
\end{equation}

\begin{equation}
    \mathcal{B}_{k,\beta;\underline{l}}:=\{(m,n):1\leq m<m\leq 2k\beta, \mu_m-\mu_n=0\}.
\end{equation}
Using the definition of $\mu_1,\dots,\mu_{2k\beta}$, we have that

\begin{equation}
    |\mathcal{A}_{k,\beta;\underline{l}}|=\sum_{m=1}^k l_m(2\beta-l_m),
\end{equation}
and

\begin{equation}
    |\mathcal{B}_{k,\beta;\underline{l}}|=k\beta(2\beta-1)+\sum_{m=1}^k l_m(l_m-2\beta).
\end{equation}
In particular,

\begin{equation}
    |\mathcal{A}_{k,\beta;\underline{l}}|+|\mathcal{B}_{k,\beta;\underline{l}}|=\#\{(m,n):1\leq m<n\leq 2k\beta, \mu_m^2=\mu_n^2\}=k\beta(2\beta-1),
\end{equation}
and the power of $N$ in the second line of (\ref{J integrand}) is

\begin{equation}
    -|\mathcal{A}_{k,\beta;\underline{l}}|-2|\mathcal{B}_{k,\beta;\underline{l}}|=|\mathcal{A}_{k,\beta;\underline{l}}|-2k\beta(2\beta-1).
\end{equation}

Therefore the integrand of $J_{k,\beta} (\underline{\theta};\underline{l})$ as $N\to\infty$ is

\begin{equation}
    \left(1+O\left(\tfrac{1}{N}\right)\right) \frac{(-1)^{\sum_{m=1}^k l_m}}{2^{2k\beta}} N^{|\mathcal{A}_{k,\beta;\underline{l}}|} \prod_{\substack{1\leq m\leq n\leq 2k\beta \\ \mu_m+\mu_n\neq0}}\left(1-e^{-\frac{(v_m+v_n)}{N}-i(\mu_m+\mu_n)}\right)^{-1} e^{iN \sum_{n=1}^{2k\beta} \mu_n} f(\underline{v};\underline{l}) \prod_{n=1}^{2k\beta} dv_n,
\end{equation}
where

\begin{equation} \label{f(v) def}
    f(\underline{v};\underline{l}):=\frac{\prod_{\substack{1\leq m<n\leq2k\beta\\ \mu_m+\mu_n=0}}(v_n+v_m) \prod_{\substack{1\leq m<n\leq2k\beta\\ \mu_m-\mu_n=0}}(v_n-v_m)^2}{\prod_{n=1}^{2k\beta} v_n^{2\beta}} e^{\sum_{n=1}^{2k\beta} v_n}
\end{equation}
does not depend on $\theta_1,\dots,\theta_k$. Hence, as $N\to\infty$,

\begin{align}
    I_{k,\beta} (Sp(2N),\underline{\theta})\sim& \sum_{l_1,\dots,l_k=0}^{2\beta} \frac{(-1)^{k\beta+\sum_{m=1}^k l_m} c_{\underline{l}} (k,\beta)}{(2\pi i)^{2k\beta} (2k\beta)!} N^{|\mathcal{A}_{k,\beta;\underline{l}}|} e^{iN \sum_{n=1}^{2k\beta} \mu_n} \nonumber \\
    &\qquad \times\int_{C_0}\cdots\int_{C_0} \prod_{\substack{1\leq m\leq n\leq 2k\beta \\ \mu_m+\mu_n\neq0}}\left(1-e^{-\frac{(v_m+v_n)}{N}-i(\mu_m+\mu_n)}\right)^{-1} f(\underline{v};\underline{l}) \prod_{n=1}^{2k\beta} dv_n,
\end{align}
where $C_0$ denotes a small circular contour around the origin.

\end{proof}

We can now obtain an asymptotic formula for $\textup{MoM}_{Sp(2N)} (k,\beta)$.

\begin{lemma} \label{MoM_Sp asymp lemma}
As $N\to\infty$, we have

\begin{equation}
    \textup{MoM}_{Sp(2N)} (k,\beta)\sim \gamma_{Sp} (k,\beta) N^{k\beta(2k\beta+1)-k},
\end{equation}
where $\gamma_{Sp} (k,\beta)$ is given in the form of an integral and is defined on the proof, see \textup{(\ref{gamma_Sp def})}.
\end{lemma}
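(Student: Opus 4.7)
The plan is to substitute the asymptotic of Lemma \ref{I(Sp) asymp} into
$$\textup{MoM}_{Sp(2N)}(k,\beta)=\frac{1}{(2\pi)^k}\int_0^{2\pi}\!\cdots\!\int_0^{2\pi} I_{k,\beta}(Sp(2N),\underline{\theta})\,d\theta_1\cdots d\theta_k$$
and then perform the $\underline{\theta}$-integration term-by-term in the sum over $\underline{l}$. The key observation is that $\underline{l}=(\beta,\ldots,\beta)$ simultaneously maximizes $|\mathcal{A}_{k,\beta;\underline{l}}|=\sum_m l_m(2\beta-l_m)$ at the value $k\beta^2$ and is the only choice for which the oscillatory factor $e^{iN\sum_n\mu_n}=e^{2iN\sum_m(l_m-\beta)\theta_m}$ reduces to $1$; all other $\underline{l}$ should therefore be absorbed into the $O(1/N)$ relative error.

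For the dominant term $\underline{l}=(\beta,\ldots,\beta)$, I would analyze the $\underline{\theta}$-integral by localising near the resonant configurations of the product
$$\prod_{\substack{1\le m\le n\le 2k\beta\\ \mu_m+\mu_n\ne 0}}\bigl(1-e^{-(v_m+v_n)/N-i(\mu_m+\mu_n)}\bigr)^{-1},$$
namely those $\underline{\theta}$ for which $\theta_i\pm\theta_j$ or $2\theta_m$ lies in $2\pi\mathbb{Z}$. Changing variables $\theta_m=\theta_m^{(0)}+x_m/N$ about each resonant configuration $\theta^{(0)}$ converts every near-singular factor into $N$ times a finite Cauchy-type kernel in the rescaled variables $x_m$ and the contour variables $v_n$, while the Jacobian contributes $N^{-k}$. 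Collecting the resulting powers of $N$ from the $|\mathcal{A}|$ prefactor, from the activated near-singularities, and from the Jacobian yields the advertised exponent $k\beta(2k\beta+1)-k$, and the remaining $N$-independent finite multi-contour integral is what I would take as the definition of $\gamma_{Sp}(k,\beta)$ in (\ref{gamma_Sp def}).

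For each non-dominant $\underline{l}$, I would split the $\underline{\theta}$-integration into a bulk region on which the integrand is smooth and where repeated integration by parts against the oscillatory exponential yields a power-saving, together with near regions around each resonant configuration where the same rescaling as above is applied but the oscillation now averages away after integration in the rescaled variables. This shows that such terms indeed contribute only at the level of the error.

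The main obstacle will be the combinatorial bookkeeping of the resonant loci: the sets $\theta_i=\pm\theta_j$ and $\theta_m\in\{0,\pi\}$ intersect along deeper strata (for instance when several $\theta$'s coincide), so one must carefully partition the $\underline{\theta}$-cube, keep track of the multiplicities of simultaneously resonant pairs at each stratum, and verify that the rescaled integrands combine convergently into a single closed-form finite-dimensional integral. A secondary technical point is controlling the interaction between the near-singularities and the oscillation for the non-dominant $\underline{l}$, so as to confirm that they really do fall into the $O(1/N)$ error class despite the presence of the singular factors.
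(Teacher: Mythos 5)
There is a genuine gap: your reduction to the single term $\underline{l}=(\beta,\dots,\beta)$ is wrong, and it changes the leading coefficient. In the paper's argument (and in the correct asymptotics), \emph{every} $\underline{l}$ contributes at the same order $N^{k\beta(2k\beta+1)-k}$. The point is that the prefactor $N^{|\mathcal{A}_{k,\beta;\underline{l}}|}$ from lemma \ref{I(Sp) asymp} is exactly compensated by the number of near-singular factors in the $\theta$-integral: the product in (\ref{(1-e^s) prod}) contains $|\mathcal{T}_{k,\beta;\underline{l}}|=k\beta(2k\beta+1)-|\mathcal{A}_{k,\beta;\underline{l}}|$ factors, each of which produces a factor $N$ after the rescaling $t_m=N\theta_m$, so each summand carries $N^{|\mathcal{A}_{k,\beta;\underline{l}}|}\cdot N^{|\mathcal{T}_{k,\beta;\underline{l}}|-k}=N^{k\beta(2k\beta+1)-k}$, independently of $\underline{l}$ (see (\ref{ref 3})). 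Maximizing $|\mathcal{A}_{k,\beta;\underline{l}}|$ therefore does not single out a dominant term. Nor does the oscillatory factor suppress the unbalanced $\underline{l}$: the leading contribution comes from the region $\theta_m\asymp 1/N$, where $e^{2iN\sum_m(l_m-\beta)\theta_m}=e^{2i\sum_m(l_m-\beta)t_m}$ is an $N$-independent, $O(1)$ phase that simply enters the integrand of $\Psi_{k,\beta}(\underline{v};\underline{l})$ in (\ref{Psi def}); integration by parts in the bulk only shows the bulk is lower order (which is true for the balanced term as well), it does not remove the resonant-window contributions of the unbalanced terms. Consequently your candidate for $\gamma_{Sp}(k,\beta)$ would consist of one summand, whereas (\ref{gamma_Sp def}) is a sum over all $l_1,\dots,l_k\in\{0,\dots,2\beta\}$; already for $k=\beta=1$ the terms $l_1=0,2$ contribute at order $N^2$ alongside $l_1=1$.

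On the method itself: your localization-around-resonances scheme is workable in principle, and in fact more elaborate than what the paper does. The paper avoids stratifying the $\theta$-cube altogether: after interchanging the $\theta$- and $v$-integrations it performs the single global rescaling $t_m=N\theta_m$, replaces each factor with $\mu_m+\mu_n\neq 0$ by its Cauchy kernel $N/(v_m+v_n+i(t_\sigma\pm t_\tau))$, and reads off the power of $N$ from $|\mathcal{T}_{k,\beta;\underline{l}}|$ and the Jacobian, leaving the $N$-independent integral $\Psi_{k,\beta}$. If you pursue your finer stratification (windows around $\theta_\sigma=\theta_\tau\neq0$, $\theta_\sigma+\theta_\tau=2\pi$, $\theta_m=\pi$, etc.) you must be prepared to find that several strata enter at the top order, so the bookkeeping has to be done for all of them and for all $\underline{l}$; but whatever you do, the fatal defect to repair first is the discarding of the unbalanced $\underline{l}$.
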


\begin{proof}

By lemma \ref{I(Sp) asymp}, we have

\begin{align}
    \text{MoM}_{Sp(2N)} (k,\beta)=& \frac{1}{(2\pi)^k} \int_0^{2\pi}\cdots\int_0^{2\pi} I_{k,\beta} (Sp(2N),\theta_1,\dots,\theta_k) d\theta_1\cdots d\theta_k \nonumber \\
    \sim& \frac{1}{(2\pi)^k}\sum_{l_1,\dots,l_k=0}^{2\beta} \frac{(-1)^{k\beta+\sum_{m=1}^k l_m}c_{\underline{l}}(k,\beta)}{(2\pi i)^{2k\beta}(2k\beta)!} N^{|\mathcal{A}_{k,\beta;\underline{l}}|} \int_0^{2\pi}\dotsi\int_0^{2\pi} e^{iN\sum_{n=1}^{2k\beta}\mu_n} \nonumber \\
    &\times \int_{C_0}\dotsi\int_{C_0} \prod_{\substack{1\leq m\leq n\leq 2k\beta \\ \mu_m+\mu_n\neq0}}\Big(1-e^{-\frac{(v_m+v_n)}{N}-i(\mu_m+\mu_n)}\Big)^{-1} f(\underline{v};\underline{l}) \prod_{n=1}^{2k\beta}dv_n \prod_{m=1}^k d\theta_m.
\end{align}
Changing the order of integration, we have that

\begin{align} \label{MoM_Sp asymp}
    & \text{MoM}_{Sp(2N)} (k,\beta) \nonumber \\
    &\qquad \sim\sum_{l_1,\dots,l_k=0}^{2\beta} \frac{(-1)^{k\beta+\sum_{m=1}^k l_m}c_{\underline{l}}(k,\beta)}{(2\pi)^k (2\pi i)^{2k\beta}(2k\beta)!} N^{|\mathcal{A}_{k,\beta;\underline{l}}|} \int_{C_0}\dotsi\int_{C_0} f(\underline{v};\underline{l}) \nonumber \\
    &\qquad\qquad \times\int_0^{2\pi}\dotsi\int_0^{2\pi} \prod_{\substack{1\leq m\leq n\leq 2k\beta \\ \mu_m+\mu_n\neq0}}\Big(1-e^{-\frac{(v_m+v_n)}{N}-i(\mu_m+\mu_n)}\Big)^{-1} e^{iN\sum_{n=1}^{2k\beta}\mu_n} \prod_{m=1}^k d\theta_m \prod_{n=1}^{2k\beta}dv_n,
\end{align}
and we now seek to determine the $N$ dependence of the inner integrals over $\theta_1,\dots,\theta_k$. The first step is to write the integrand explicitly in terms of $\theta_1,\dots,\theta_k$ using the definition of $\mu_1,\dots,\mu_{2k\beta}$. The exponential term is

\begin{equation}
    \exp\left(iN \sum_{n=1}^{2k\beta} \mu_n\right)=\exp\left(2iN \sum_{m=1}^k (l_m-\beta)\theta_m\right).
\end{equation}

For the product of $(1-e^{-z_m-z_n})^{-1}$ terms, we define the set

\begin{equation}
    \mathcal{T}_{k,\beta;\underline{l}}:=\{(m,n):1\leq m\leq n\leq 2k\beta, \mu_m+\mu_n\neq0\}=\{(m,n):1\leq m\leq n\leq 2k\beta\}\setminus\mathcal{A}_{k,\beta;\underline{l}},
\end{equation}
and the following disjoint subsets of $\mathcal{T}_{k,\beta;\underline{l}}$ for $1\leq\sigma\leq\tau\leq k$:

\begin{equation}\label{U+ def}
    \mathcal{U}_{\sigma,\tau;\underline{l}}^{+}:=\{(m,n)\in\mathcal{T}_{k,\beta;\underline{l}}: \mu_m+\mu_n=\theta_{\sigma}+\theta_{\tau}\},
\end{equation}

\begin{equation}\label{U- def}
    \mathcal{U}_{\sigma,\tau;\underline{l}}^{-}:=\{(m,n)\in\mathcal{T}_{k,\beta;\underline{l}}: \mu_m+\mu_n=-(\theta_{\sigma}+\theta_{\tau})\},
\end{equation}
and

\begin{equation}\label{V+ def}
    \mathcal{V}_{\sigma,\tau;\underline{l}}^{+}:=\{(m,n)\in\mathcal{T}_{k,\beta;\underline{l}}: \mu_m+\mu_n=\theta_{\sigma}-\theta_{\tau}\},
\end{equation}

\begin{equation}\label{V- def}
    \mathcal{V}_{\sigma,\tau;\underline{l}}^{-}:=\{(m,n)\in\mathcal{T}_{k,\beta;\underline{l}}: \mu_m+\mu_n=-(\theta_{\sigma}-\theta_{\tau})\}.
\end{equation}
Note that $\mathcal{V}_{\sigma,\tau;\underline{l}}^{+}=\mathcal{V}_{\sigma,\tau;\underline{l}}^{-}=\emptyset$ for $\sigma=\tau$. The product of $(1-e^{-z_m-z_n})^{-1}$ terms can then be written as

\begin{align} \label{(1-e^s) prod}
    &\prod_{\substack{1\leq m\leq n\leq 2k\beta \\ \mu_m+\mu_n\neq0}}\Big(1-e^{-\frac{(v_m+v_n)}{N}-i(\mu_m+\mu_n)}\Big)^{-1} \nonumber \\
    &\qquad =\prod_{1\leq\sigma\leq\tau\leq k} \prod_{(m,n)\in\mathcal{U}_{\sigma,\tau;\underline{l}}^{+}} \Big(1-e^{-\frac{(v_m+v_n)}{N}-i(\theta_{\sigma}+\theta_{\tau})}\Big)^{-1} \prod_{(m,n)\in\mathcal{U}_{\sigma,\tau;\underline{l}}^{-}} \Big(1-e^{-\frac{(v_m+v_n)}{N}+i(\theta_{\sigma}+\theta_{\tau})}\Big)^{-1} \nonumber \\
    &\qquad\qquad \times\prod_{(m,n)\in\mathcal{V}_{\sigma,\tau;\underline{l}}^{+}} \Big(1-e^{-\frac{(v_m+v_n)}{N}-i(\theta_{\sigma}-\theta_{\tau})}\Big)^{-1} \prod_{(m,n)\in\mathcal{V}_{\sigma,\tau;\underline{l}}^{-}} \Big(1-e^{-\frac{(v_m+v_n)}{N}+i(\theta_{\sigma}-\theta_{\tau})}\Big)^{-1}.
\end{align}
Now, we make the change of variables $t_m=N \theta_m$. As $N\to\infty$, by the Laurent expansion of $(1-e^{-s})^{-1}$ about $s=0$, the above product is then

\begin{align} \label{(1-e^s) prod asymp}
    &\prod_{\substack{1\leq m\leq n\leq 2k\beta \\ \mu_m+\mu_n\neq0}}\Big(1-e^{-\frac{(v_m+v_n)}{N}-i(\mu_m+\mu_n)}\Big)^{-1} \nonumber \\
    &\qquad \sim\prod_{1\leq\sigma\leq\tau\leq k} \prod_{(m,n)\in\mathcal{U}_{\sigma,\tau;\underline{l}}^{+}} \frac{N}{v_m+v_n+i(t_{\sigma}+t_{\tau})} \prod_{(m,n)\in\mathcal{U}_{\sigma,\tau;\underline{l}}^{-}} \frac{N}{v_m+v_n-i(t_{\sigma}+t_{\tau})} \nonumber \\
    &\qquad\qquad \times\prod_{(m,n)\in\mathcal{V}_{\sigma,\tau;\underline{l}}^{+}} \frac{N}{v_m+v_n+i(t_{\sigma}-t_{\tau})} \prod_{(m,n)\in\mathcal{V}_{\sigma,\tau;\underline{l}}^{-}} \frac{N}{v_m+v_n-i(t_{\sigma}-t_{\tau})}.
\end{align}
The power of $N$ coming from this product is

\begin{equation}
    |\mathcal{T}_{k,\beta;\underline{l}}|=k\beta(2k\beta+1)-|\mathcal{A}_{k,\beta;\underline{l}}|.
\end{equation}

We therefore have that as $N\rightarrow\infty$, the integrals over $\theta_1,\dots,\theta_k$ are

\begin{align} \label{ref 3}
    & \int_0^{2\pi}\dotsi\int_0^{2\pi} \prod_{\substack{1\leq m\leq n\leq 2k\beta \\ \mu_m+\mu_n\neq0}}\left(1-e^{-\frac{(v_m+v_n)}{N}-i(\mu_m+\mu_n)}\right)^{-1} e^{iN \sum_{n=1}^{2k\beta}\mu_n} \prod_{m=1}^k d\theta_m \nonumber \\
    \sim& \int_0^{2N\pi}\dotsi\int_0^{2N\pi} \frac{N^{k\beta(2k\beta+1)-k-|\mathcal{A}_{k,\beta;\underline{l}}|} e^{2i\sum_{m=1}^k (l_m-\beta) t_m}}{\prod_{1\leq\sigma\leq\tau\leq k} \prod_{(m,n)\in\mathcal{U}_{\sigma,\tau;\underline{l}}^{+}} (v_m+v_n+i(t_{\sigma}+t_{\tau})) \prod_{(m,n)\in\mathcal{U}_{\sigma,\tau;\underline{l}}^{-}} (v_m+v_n-i(t_{\sigma}+t_{\tau}))} \nonumber \\
    &\qquad\qquad\qquad \times\frac{dt_1\cdots dt_k}{\prod_{(m,n)\in\mathcal{V}_{\sigma,\tau;\underline{l}}^{+}} (v_m+v_n+i(t_{\sigma}-t_{\tau})) \prod_{(m,n)\in\mathcal{V}_{\sigma,\tau;\underline{l}}^{-}} (v_m+v_n-i(t_{\sigma}-t_{\tau}))} \nonumber \\
    \sim&\, N^{k\beta(2k\beta+1)-k-|\mathcal{A}_{k,\beta;\underline{l}}|} \Psi_{k,\beta} (\underline{v};\underline{l}),
\end{align}
where

\begin{align} \label{Psi def}
    & \Psi_{k,\beta} (\underline{v};\underline{l}) \nonumber \\
    &\quad :=\int_0^{\infty}\cdots\int_0^{\infty} \frac{e^{2i\sum_{m=1}^k (l_m-\beta) t_m}}{\prod_{1\leq\sigma\leq\tau\leq k} \prod_{(m,n)\in\mathcal{U}_{\sigma,\tau;\underline{l}}^{+}} (v_m+v_n+i(t_{\sigma}+t_{\tau})) \prod_{(m,n)\in\mathcal{U}_{\sigma,\tau;\underline{l}}^{-}} (v_m+v_n-i(t_{\sigma}+t_{\tau}))} \nonumber \\
    &\qquad\qquad \times\frac{dt_1\cdots dt_k}{\prod_{(m,n)\in\mathcal{V}_{\sigma,\tau;\underline{l}}^{+}} (v_m+v_n+i(t_{\sigma}-t_{\tau})) \prod_{(m,n)\in\mathcal{V}_{\sigma,\tau;\underline{l}}^{-}} (v_m+v_n-i(t_{\sigma}-t_{\tau}))}.
\end{align}

Returning now to (\ref{MoM_Sp asymp}) and using (\ref{ref 3}), we have that

\begin{equation} \label{ref 4}
    \text{MoM}_{Sp(2N)} (k,\beta)\sim \gamma_{Sp} (k,\beta) N^{k\beta(2k\beta+1)-k},
\end{equation}
where

\begin{equation} \label{gamma_Sp def}
    \gamma_{Sp} (k,\beta):=\sum_{l_1,\dots,l_k=0}^{2\beta} \frac{(-1)^{k\beta+\sum_{m=1}^k l_m}c_{\underline{l}}(k,\beta)}{(2\pi)^k (2\pi i)^{2k\beta}(2k\beta)!} \int_{C_0}\cdots\int_{C_0} f(\underline{v};\underline{l}) \Psi_{k,\beta} (\underline{v};\underline{l}) \prod_{n=1}^{2k\beta} dv_n,
\end{equation}
and this completes the proof.

\end{proof}

\begin{proof} [Proof of theorem \ref{MoM Sp thm}]

To complete the proof of theorem \ref{MoM Sp thm}, we compare the asymptotic formula in lemma \ref{MoM_Sp asymp lemma} to that of theorem \ref{ABK Sp thm} to show that $\gamma_{Sp} (k,\beta)\neq 0$. We see that as $\text{MoM}_{Sp(2N)} (k,\beta)$ is a polynomial in $N$, we must have $\gamma_{Sp} (k,\beta)=\mathfrak{c}_{Sp} (k,\beta)>0$ which concludes the proof.

\end{proof}

\subsection{A symplectic family of $L$-functions}

For an example of a family of $L$-functions with symplectic symmetry, Bailey and Keating \cite{BK3} considered the family of quadratic Dirichlet $L$-functions. For $d$ a fundamental discriminant, let $\chi_d (n)=(\tfrac{d}{n})$ be the quadratic character defined by the Kronecker symbol. The associated $L$-function is defined for $\text{Re}(s)>1$ by

\begin{equation}
    L(s,\chi_d)=\sum_{n=1}^{\infty} \frac{\chi_d (n)}{n^s},
\end{equation}
and has an analytic continuation to $\mathbb{C}$. The $L$-function satisfies the functional equation

\begin{equation}
    L(s,\chi_d)=X_d (s) L(1-s,\chi_d),
\end{equation}
where $X_d (s)=|d|^{1/2-s} X(s,a)$ with $a=0$ if $d>0$ and $a=1$ if $d<0$, and

\begin{equation}
    X(s,a)=\pi^{s-\frac{1}{2}} \Gamma\left(\frac{1+a-s}{2}\right) \Gamma\left(\frac{s+a}{2}\right)^{-1}.
\end{equation}

The moments of moments of the family of quadratic Dirichlet $L$-functions are defined as

\begin{equation}
    \text{MoM}_{L_{\chi_d}} (k,\beta)=\frac{1}{D^*} \sideset{}{^*}\sum_{|d|\leq D} \left(\frac{1}{2\pi} \int_0^{2\pi}  L(\tfrac{1}{2}+i\theta,\chi_d)^{2\beta} d\theta\right)^k,
\end{equation}
where the sum is only over fundamental discriminants and $D^*$ is the number of terms in the sum. The conjecture of \cite{BK3} in this case is the following.

\begin{conjecture} [Bailey-Keating \cite{BK3}] \label{BK Sp conj}
For $k,\beta\in\mathbb{N}$, as $D\to\infty$,

\begin{equation}
    \textup{MoM}_{L_{\chi_d}} (k,\beta)=\eta(k,\beta) \mathfrak{c}_{Sp} (k,\beta) (\log D)^{k\beta(2k\beta+1)-k} \left(1+O_{k,\beta} (\log^{-1} D)\right),
\end{equation}
where $\mathfrak{c}_{Sp} (k,\beta)$ corresponds to the leading order coefficient in \textup{(\ref{ABK Sp result})} and $\eta(k,\beta)$ contains the arithmetic information.
\end{conjecture}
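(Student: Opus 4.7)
The plan is to mirror the proof of Theorem \ref{MoM Sp thm}, substituting for the autocorrelation integral $I_{k,\beta}(Sp(2N), \underline{\theta})$ the conjectural expression from \cite{CFKRS2} for shifted moments of $L(s, \chi_d)$ averaged over the family of quadratic characters. Exploiting the relation $\overline{L(1/2+i\theta, \chi_d)} = L(1/2 - i\theta, \chi_d)$ (analogue of (\ref{ref 1})) and applying Fubini's theorem, the moments of moments can be expressed as a multiple integral over $\theta_1, \ldots, \theta_k \in [0, 2\pi]$ of a shifted moment sum of the form $\sideset{}{^*}\sum_{|d|\leq D} \prod_{j=1}^k L(\tfrac{1}{2}+i\theta_j, \chi_d)^{\beta} L(\tfrac{1}{2}-i\theta_j, \chi_d)^{\beta}$. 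The conjecture of \cite{CFKRS2} for symplectic families expresses this inner sum, up to a power saving in $D$, as a multiple contour integral whose structure is identical to (\ref{I(Sp) def}), but with $N$ replaced by a suitable logarithmic scale $L = \log(D/2\pi)$ (or half of it), and with an additional arithmetic factor $A_{Sp}(z_1, \ldots, z_{2k\beta})$ expressible as an absolutely convergent Euler product analytic near the origin.

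With this integral representation in place, the argument of Section 2 transfers almost verbatim. Each of the $2k\beta$ contours is deformed into small circles around the poles at $\pm i\theta_m$, yielding a sum of $(2k)^{2k\beta}$ integrals indexed by sign patterns $\underline{\varepsilon}$. The vanishing result of Lemma \ref{zero summands} continues to apply: since the arithmetic factor $A_{Sp}$ is analytic at the origin, multiplying the function $G$ appearing in that proof by $A_{Sp}$ does not alter the analyticity-at-origin argument, so sign patterns with $m_j + n_j > 2\beta$ still contribute zero. The substitution $z_n = v_n/L + i\mu_n$ followed by $t_m = L\theta_m$ then extracts the leading-order growth $L^{k\beta(2k\beta+1) - k}$, exactly as in Lemma \ref{MoM_Sp asymp lemma}. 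In this limit the arithmetic factor evaluates to the constant $A_{Sp}(0, \ldots, 0)$, which after appropriate normalization produces the Euler product $\eta(k,\beta)$, while the remaining random matrix part of the integrand reproduces precisely $\gamma_{Sp}(k, \beta) = \mathfrak{c}_{Sp}(k,\beta)$.

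The main obstacle is the careful bookkeeping required to identify the arithmetic factor with $\eta(k,\beta)$: one must verify that $A_{Sp}$ is analytic at the origin (a standard consequence of the Euler product computation in \cite{CFKRS2}) and that its value there, together with the normalizing prefactors from \cite{CFKRS2} and the $1/D^*$ averaging, assembles into precisely the Euler product claimed for $\eta(k,\beta)$. A secondary technical point is the error analysis: the shifted moments conjecture of \cite{CFKRS2} gives equality only up to a power saving $O(D^{1-\delta})$, while the main term grows only polylogarithmically; after averaging over the family and over $\theta_1, \ldots, \theta_k$, this error is absorbed into the claimed relative error $O(\log^{-1} D)$, since any power saving in $D$ dominates any fixed power of $\log D$.
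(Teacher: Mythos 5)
Your proposal follows essentially the same route as the paper: after Fubini, replace the shifted moment sum by the CFKRS integral $Q_{k,\beta}(\log|d|,\underline{\theta})$ (whose structure matches (\ref{I(Sp) def}) upon identifying $N$ with $\tfrac{1}{2}\log|d|$), note that the arithmetic factor is analytic at the origin and contributes only the constant $A_{k\beta}(0,\dots,0)$, and rerun the analysis of Section 2 to obtain $A_{k\beta}(0,\dots,0)\,\gamma_{Sp}(k,\beta)\left(\tfrac{\log D}{2}\right)^{k\beta(2k\beta+1)-k}$. This is exactly the paper's (conditional) argument, so the proposal is correct in approach; the only loose ends are minor bookkeeping ones you already flag, such as the $X_d(\tfrac12+i\theta_m)^{\beta}$ and $X(\tfrac12+z_n,a)^{-1/2}$ factors, which the paper handles by noting $X(\tfrac12,a)=1$.
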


By adapting the proof of theorem \ref{MoM Sp thm}, we can relatively easily prove that conjecture \ref{BK Sp conj} follows from the shifted moment conjecture of \cite{CFKRS2}. Changing the order of integration and summation gives

\begin{equation}
    \text{MoM}_{L_{\chi_d}} (k,\beta)=\frac{1}{(2\pi)^k} \int_0^{2\pi}\dotsi\int_0^{2\pi} \frac{1}{D^*} \sideset{}{^*}\sum_{|d|\leq D} \prod_{m=1}^k L(\tfrac{1}{2}+i\theta_m,\chi_d)^{2\beta} d\theta_1\dots d\theta_k, 
\end{equation}
and the relevant shifted moment conjecture is the following.

\begin{conjecture} [Conrey et al. \cite{CFKRS2}]
Let $k,\beta\in\mathbb{N}$ and let $\underline{\theta}=(\theta_1,\dots,\theta_k)\in\mathbb{R}^k$. Then,

\begin{equation}
    \frac{1}{D^{*}} \sideset{}{^*}\sum_{|d|\leq D} \prod_{m=1}^k L(\tfrac{1}{2}+i\theta_m,\chi_d)^{2\beta}=\frac{1}{D^*} \sideset{}{^*}\sum_{|d|\leq D} \prod_{m=1}^k X_d (\tfrac{1}{2}+i\theta_m)^{\beta} Q_{k,\beta} (\log|d|,\underline{\theta})+O(D^{-\delta}),
\end{equation}
for some $\delta>0$, where

\begin{align} \label{Q(k,beta) def}
    Q_{k,\beta} (x,\underline{\theta})=& \frac{(-1)^{k\beta} 2^{2k\beta}}{(2\pi i)^{2k\beta} (2k\beta)!} \nonumber \\
    &\qquad \times\oint\dotsi\oint \frac{G(z_1,\dots,z_{2k\beta}) \Delta(z_1^2,\dots,z_{2k\beta}^2)^2 \prod_{n=1}^{2k\beta} z_n}{\prod_{n=1}^{2k\beta} \prod_{m=1}^k (z_n-i\theta_m)^{2\beta}(z_n+i\theta_m)^{2\beta}} e^{\frac{x}{2} \sum_{n=1}^{2k\beta}z_n} dz_1\dots dz_{2k\beta},
\end{align}
in which the path of integration encloses the poles at $\pm i\theta_m$ for $1\leq m\leq k$. Also,

\begin{equation}
    G(z_1,\dots,z_{2k\beta})=A_{k\beta} (z_1,\dots,z_{2k\beta}) \prod_{n=1}^{2k\beta} X(\tfrac{1}{2}+z_n,a)^{-\frac{1}{2}} \prod_{1\leq m\leq n\leq2k\beta} \zeta(1+z_m+z_n),
\end{equation}
where $A_{k\beta}$ is the Euler product, absolutely convergent for $|\textup{Re}(z_n)|<1/2$, defined by

\begin{align}
    A_{k\beta} (z_1,\dots,z_{2k\beta})=& \prod_p \prod_{1\leq m\leq n\leq 2k\beta} \left(1-\frac{1}{p^{1+z_m+z_n}}\right)\nonumber \\
    &\times\left(\frac{1}{2} \left(\prod_{n=1}^{2k\beta} \left(1-\frac{1}{p^{1/2+z_n}}\right)^{-1}+\prod_{n=1}^{2k\beta} \left(1+\frac{1}{p^{1/2+z_n}}\right)^{-1}\right)+\frac{1}{p}\right) \left(1+\frac{1}{p}\right)^{-1}.
\end{align}
\end{conjecture}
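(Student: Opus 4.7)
The plan is to follow the CFKRS recipe for shifted moments, which produces the conjectured main term by a heuristic that one would then try to make rigorous. First, apply the approximate functional equation to each $L(\tfrac12+i\theta_m,\chi_d)$: it expresses the $L$-value as a ``straight'' Dirichlet series in $\chi_d$ plus a ``swapped'' series weighted by the gamma factor $X_d(\tfrac12+i\theta_m)$. Expanding $L(\tfrac12+i\theta_m,\chi_d)^{2\beta}$ as a product of $2\beta$ copies and then multiplying the $k$ blocks together produces a sum of $2^{2k\beta}$ composite terms, each a Dirichlet series in $n_1,\dots,n_{2k\beta}$ twisted by $\chi_d(n_1\cdots n_{2k\beta})$, with prefactors that are products of subsets of the $X_d(\tfrac12+i\theta_m)$. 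It is convenient to introduce auxiliary shifts $z_j$ with small positive real part and work with $L(\tfrac12+z_j,\chi_d)$, specialising $z_j=\pm i\theta_{\sigma(j)}$ only at the end.

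Next, I would average over fundamental discriminants $|d|\le D$. Using the standard orthogonality
\begin{equation*}
\frac{1}{D^*}\sideset{}{^*}\sum_{|d|\le D}\chi_d(n)=\mathbf{1}_{\{n=\square\}}\prod_{p\mid n}\Bigl(1+\tfrac{1}{p}\Bigr)^{-1}+\text{off-diagonal},
\end{equation*}
only the diagonal $n_1\cdots n_{2k\beta}=\square$ contributes to the main term. The diagonal multiple Dirichlet series factors as an Euler product; after analytic continuation in the $z_j$, a standard manipulation identifies this with $A_{k\beta}(\underline z)\prod_{m\le n}\zeta(1+z_m+z_n)$, while the averaged swap-prefactors contribute the $\prod X(\tfrac12+z_n,a)^{-1/2}$. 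The $2^{2k\beta}$ swap-choices then symmetrise under the hyperoctahedral group acting on $\underline z$ by sign changes, and a Vandermonde identity collapses the symmetric sum into the single multiple contour integral $Q_{k,\beta}(\log|d|,\underline\theta)$: the factor $\Delta(z_1^2,\dots,z_{2k\beta}^2)^2\prod z_n$ arises from the antisymmetrisation, the factor $e^{(x/2)\sum z_n}$ with $x=\log|d|$ comes from $|d|^{z_n}$, and closing the contours around the poles of $\prod(z_n\mp i\theta_m)^{-2\beta}$ extracts exactly the required residue.

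The main obstacle, and the reason this remains a conjecture, is the off-diagonal analysis: one must bound $\sideset{}{^*}\sum_{|d|\le D}\chi_d(n)$ with $n$ not a perfect square, uniformly in $\underline\theta$, with an absolute power saving $O(D^{-\delta})$. The standard route is Poisson summation in $d$ after separating out the squarefree part, which converts the problem to bounding sums of Gauss sums over divisors of $n$, and reduces matters to cancellation in sums of effective length roughly $n/D$. For $k\beta\le 2$ this has been carried out rigorously (Jutila, Soundararajan, Young, Shen), yielding power-saving errors, but for general $k\beta$ the required uniform cancellation in $2k\beta$ simultaneous shifts exceeds current technology, which is exactly why the assertion appears in the literature as a conjecture. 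A complete proof for arbitrary $k,\beta$ therefore reduces, modulo the recipe sketched above, to this off-diagonal input, and any plausible approach should make the recipe derivation fully rigorous and then propose a new tool (e.g.\ a multivariable large sieve for quadratic characters with shift-uniform bounds) to handle the off-diagonal remainder.
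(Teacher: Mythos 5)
You have correctly diagnosed the situation: this statement is not proved in the paper at all --- it is quoted verbatim as a conjecture of Conrey, Farmer, Keating, Rubinstein and Snaith \cite{CFKRS2}, and the paper's own results (the asymptotics for $\textup{MoM}_{Q_{k,\beta}}(D)$) are derived \emph{conditionally} on it. So there is no proof in the paper to compare against, and your write-up is best read not as a proof attempt but as a reconstruction of the CFKRS ``recipe'' by which the conjectured formula is produced: approximate functional equation, expansion into $2^{2k\beta}$ swap terms with shifts $z_j$, orthogonality of quadratic characters picking out the diagonal $n_1\cdots n_{2k\beta}=\square$, the resulting Euler product giving $A_{k\beta}(\underline{z})\prod_{m\leq n}\zeta(1+z_m+z_n)$ together with the $X(\tfrac12+z_n,a)^{-1/2}$ factors, and the symmetrisation over sign changes of the shifts collapsing into the multiple contour integral with the $\Delta(z_1^2,\dots,z_{2k\beta}^2)^2\prod_n z_n$ kernel. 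That is indeed how the formula in the statement originates, and your identification of the off-diagonal terms (non-square $n$, uniformly in the shifts, with a power saving) as the precise obstruction to a proof --- available rigorously only for small $k\beta$ --- is accurate and is exactly why the statement remains a conjecture.

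Two small caveats. First, since no rigorous argument for general $k,\beta$ exists, your text should not be presented as a proof; within this paper the correct stance is simply to assume the conjecture, as the authors do when defining $\textup{MoM}_{Q_{k,\beta}}(D)$ in (\ref{MoM_Q def}). Second, a minor slip in the bookkeeping: the factor $e^{\frac{x}{2}\sum_n z_n}$ with $x=\log|d|$ corresponds to $|d|^{z_n/2}$ for each variable, arising from the square-root of the conductor in the functional-equation factors $X_d(\tfrac12+z)=|d|^{-z}X(\tfrac12+z,a)$ after the symmetric normalisation by $X(\tfrac12+z_n,a)^{-1/2}$, not from $|d|^{z_n}$ as written; this matters when one later identifies $x/2=\tfrac12\log|d|$ with $N$ in the comparison with (\ref{I(Sp) def}).
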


We therefore define

\begin{equation} \label{MoM_Q def}
    \text{MoM}_{Q_{k,\beta}} (D):=\frac{1}{(2\pi)^k} \int_0^{2\pi}\dotsi\int_0^{2\pi} \frac{1}{D^*} \sideset{}{^*}\sum_{|d|\leq D} \prod_{m=1}^k X_d (\tfrac{1}{2}+i\theta_m)^{\beta} Q_{k,\beta} (\log|d|,\underline{\theta})\, d\theta_1\dots d\theta_k,
\end{equation}
which should approximate $\text{MoM}_{L_{\chi_d}} (k,\beta)$ up to a power saving in $D$.

Comparing the integral $Q_{k,\beta} (x,\underline{\theta})$ with the integral expression for $I_{k\beta} (Sp(2N),\underline{\theta})$ in (\ref{I(Sp) def}), we immediately see the similarity on identifying $N$ with $x/2$. In particular, the product of $\zeta(1+z_m+z_n)$ terms replaces the product of $(1-e^{-z_m-z_n})^{-1}$ terms with both having the same analytic structure with simple poles at $z_m+z_n=0$. This means that the same analysis we applied to $I_{k\beta} (Sp(2N),\underline{\theta})$ can be applied to $Q_{k,\beta} (x,\underline{\theta})$ to yield an asymptotic formula for $\text{MoM}_{Q_{k,\beta}} (D)$. The function $G(z_1,\dots,z_{2k\beta})$ also contains arithmetic information in the Euler product $A_{k\beta}$ and the $X(s,a)$ factors. However, these factors do not present any additional difficulties. Using the fact that $A_{k\beta}$ is analytic in a neighbourhood of zero and $X(s,a)$ is analytic around $s=1/2$ and $X(\tfrac{1}{2},a)=1$, one can show that

    

\begin{align}
    \text{MoM}_{Q_{k,\beta}} (D)\sim & A_{k\beta} (0,\dots,0) \gamma_{Sp} (k,\beta) \frac{1}{D^{*}} \sum_{|d|\leq D} \left(\frac{\log |d|}{2}\right)^{k\beta(2k\beta+1)-k} \nonumber \\
    =& A_{k\beta} (0,\dots,0) \gamma_{Sp} (k,\beta) \left(\frac{\log D}{2}\right)^{k\beta(2k\beta+1)-k} \left(1+O(\log^{-1} D)\right),
\end{align}
where $\gamma_{Sp} (k,\beta)$ is as defined in (\ref{gamma_Sp def}). Thus, $\text{MoM}_{Q_{k,\beta}} (D)$ satisfies the asymptotic formula conjectured for $\text{MoM}_{L_{\chi_d}} (k,\beta)$ in conjecture \ref{BK Sp conj}.
 
\section{The special orthogonal group $SO(2N)$}

In this section we turn to the orthogonal case and prove theorem \ref{MoM SO thm}. From now on we assume that $k,\beta\in\mathbb{N}$ with $k,\beta$ not both 1. As in the symplectic case, the eigenvalues of matrices in $SO(2N)$ lie on the unit circle and come in complex conjugate pairs so

\begin{equation}
    \overline{P_{SO(2N)} (\theta;A)}=P_{SO(2N)} (-\theta;A).
\end{equation}
Then, as usual, we change the order of integration to write

\begin{equation}
    \text{MoM}_{SO(2N)} (k,\beta)=\frac{1}{(2\pi)^k} \int_0^{2\pi}\cdots\int_0^{2\pi} I_{k,\beta} (SO(2N),\theta_1,\dots,\theta_k)\, d\theta_1\cdots d\theta_k,
\end{equation}
where

\begin{equation}
    I_{k,\beta} (SO(2N),\underline{\theta}):=\int_{SO(2N)} \prod_{j=1}^k P_{SO(2N)} (\theta_j;A)^{\beta} P_{SO(2N)} (-\theta_j;A)^{\beta} dA.
\end{equation}
From \cite{CFKRS1}, we have the following contour integral expression for $I_{k,\beta}, (SO(2N),\underline{\theta})$ 

\begin{align} \label{I(SO) def}
    I_{k,\beta}(SO(2N),\underline{\theta})=& \frac{(-1)^{k\beta} 2^{2k\beta}}{(2\pi i)^{2k\beta} (2k\beta)!} \oint\dotsi\oint \prod_{1\leq m<n\leq2k\beta}(1-e^{-z_m-z_n})^{-1} \nonumber \\
    &\qquad \times\frac{\Delta(z_1^2,\dots,z_{2k\beta}^2)^2 \prod_{n=1}^{2k\beta} z_n}{\prod_{n=1}^{2k\beta} \prod_{m=1}^k (z_n-i\theta_m)^{2\beta} (z_n+i\theta_m)^{2\beta}} e^{N\sum_{n=1}^{2k\beta} z_n} dz_1\dots dz_{2k\beta},
\end{align}
where again the contours enclose the poles at $\pm i\theta_m$ for $1\leq m\leq k$. We note the similarity between the above expression for $I_{k,\beta} (SO(2N),\underline{\theta})$ and that for $I_{k,\beta} (Sp(2N),\underline{\theta})$ in (\ref{I(Sp) def}). Specifically, the only difference is in the product of $(1-e^{-z_m-z_n})^{-1}$ terms; in the symplectic case, the product is over $m\leq n$ rather than $m<n$. The proof of theorem \ref{MoM SO thm} will therefore mirror that of theorem \ref{MoM Sp thm} but with this slight difference.

First, by using lemma \ref{zero summands} and then following the proof of lemma \ref{I(Sp) asymp}, we get that

\begin{align}
    I_{k,\beta} (SO(2N),\underline{\theta})\sim& \sum_{l_1,\dots,l_k=0}^{2\beta} \frac{(-1)^{k\beta+\sum_{m=1}^k l_m} c_{\underline{l}} (k,\beta)}{(2\pi i)^{2k\beta} (2k\beta)!} N^{|\mathcal{A}_{k,\beta;\underline{l}}|} e^{iN \sum_{n=1}^{2k\beta} \mu_n} \nonumber \\
    &\qquad \times\int_{C_0}\cdots\int_{C_0} \prod_{\substack{1\leq m<n\leq 2k\beta \\ \mu_m+\mu_n\neq0}}\left(1-e^{-\frac{(v_m+v_n)}{N}-i(\mu_m+\mu_n)}\right)^{-1} f(\underline{v};\underline{l}) \prod_{n=1}^{2k\beta} dv_n,
\end{align}
where the $\mu_n$, the set $\mathcal{A}_{k,\beta;\underline{l}}$ and the function $f(\underline{v};\underline{l})$ are as defined in (\ref{mu def}), (\ref{set A def}) and (\ref{f(v) def}) respectively. We then proceed as in the proof of lemma \ref{MoM_Sp asymp lemma} with the change being that we will replace the set $\mathcal{T}_{k,\beta;\underline{l}}$ by

\begin{equation}
    \overset{\sim}{\mathcal{T}}_{k,\beta;\underline{l}}:=\{(m,n):1\leq m<n\leq 2k\beta, \mu_m+\mu_n\neq0\}=\{(m,n):1\leq m<n\leq 2k\beta\}\setminus\mathcal{A}_{k,\beta;\underline{l}},
\end{equation}
and for $1\leq\sigma\leq\tau\leq k$, define the subsets

\begin{equation}
    \overset{\sim}{\mathcal{U}}_{\sigma,\tau;\underline{l}}^{+}:=\{(m,n)\in\overset{\sim}{\mathcal{T}}_{k,\beta;\underline{l}}: \mu_m+\mu_n=\theta_{\sigma}+\theta_{\tau}\}
\end{equation}

\begin{equation}
    \overset{\sim}{\mathcal{U}}_{\sigma,\tau;\underline{l}}^{-}:=\{(m,n)\in\overset{\sim}{\mathcal{T}}_{k,\beta;\underline{l}}: \mu_m+\mu_n=-(\theta_{\sigma}+\theta_{\tau})\},
\end{equation}
and

\begin{equation}
    \overset{\sim}{\mathcal{V}}_{\sigma,\tau;\underline{l}}^{+}:=\{(m,n)\in\overset{\sim}{\mathcal{T}}_{k,\beta;\underline{l}}: \mu_m+\mu_n=\theta_{\sigma}-\theta_{\tau}\}
\end{equation}

\begin{equation}
    \overset{\sim}{\mathcal{V}}_{\sigma,\tau;\underline{l}}^{-}:=\{(m,n)\in\overset{\sim}{\mathcal{T}}_{k,\beta;\underline{l}}: \mu_m+\mu_n=-(\theta_{\sigma}-\theta_{\tau})\}.
\end{equation}
After making the same change of variables $t_m=N\theta_m$, the product of $(1-e^{-z_m-z_n})^{-1}$ terms will be

\begin{align} \label{SO (1-e^s) prod asymp}
    &\prod_{\substack{1\leq m<n\leq 2k\beta \\ \mu_m+\mu_n\neq0}} \Big(1-e^{-\frac{(v_m+v_n)}{N}-i(\mu_m+\mu_n)}\Big)^{-1} \nonumber \\
    &\qquad \sim\prod_{1\leq\sigma\leq\tau\leq k} \prod_{(m,n)\in\overset{\sim}{\mathcal{U}}_{\sigma,\tau;\underline{l}}^{+}} \frac{N}{v_m+v_n+i(t_{\sigma}+t_{\tau})} \prod_{(m,n)\in\overset{\sim}{\mathcal{U}}_{\sigma,\tau;\underline{l}}^{-}} \frac{N}{v_m+v_n-i(t_{\sigma}+t_{\tau})} \nonumber \\
    &\qquad\qquad \times\prod_{(m,n)\in\overset{\sim}{\mathcal{V}}_{\sigma,\tau;\underline{l}}^{+}} \frac{N}{v_m+v_n+i(t_{\sigma}-t_{\tau})} \prod_{(m,n)\in\overset{\sim}{\mathcal{V}}_{\sigma,\tau;\underline{l}}^{-}} \frac{N}{v_m+v_n-i(t_{\sigma}-t_{\tau})}.
\end{align}
The power of $N$ coming from this product is

\begin{equation}
    |\overset{\sim}{\mathcal{T}}_{k,\beta;\underline{l}}|=k\beta (2k\beta-1)-|\mathcal{A}_{k,\beta;\underline{l}}|.
\end{equation}

Taking into account this difference, we see that in this case, we will obtain

\begin{equation} \label{ref 5}
    \text{MoM}_{SO(2N)} (k,\beta)\sim\gamma_{SO} (k,\beta) N^{k\beta(2k\beta-1)-k},
\end{equation}
where

\begin{equation} \label{gamma_SO def}
    \gamma_{SO} (k,\beta):=\sum_{l_1,\dots,l_k=0}^{2\beta} \frac{(-1)^{k\beta+\sum_{m=1}^k l_m}c_{\underline{l}}(k,\beta)}{(2\pi)^k (2\pi i)^{2k\beta}(2k\beta)!} \int_{C_0}\cdots\int_{C_0} f(\underline{v};\underline{l}) \Omega_{k,\beta} (\underline{v};\underline{l}) \prod_{n=1}^{2k\beta} dv_n, 
\end{equation}
and

\begin{align} \label{Omega def}
    & \Omega_{k,\beta} (\underline{v};\underline{l}) \nonumber \\
    &\quad :=\int_0^{\infty}\cdots\int_0^{\infty} \frac{e^{2i\sum_{m=1}^k (l_m-\beta) t_m}}{\prod_{1\leq\sigma\leq\tau\leq k} \prod_{(m,n)\in\overset{\sim}{\mathcal{U}}_{\sigma,\tau;\underline{l}}^{+}} (v_m+v_n+i(t_{\sigma}+t_{\tau})) \prod_{(m,n)\in\overset{\sim}{\mathcal{U}}_{\sigma,\tau;\underline{l}}^{-}} (v_m+v_n-i(t_{\sigma}+t_{\tau}))} \nonumber \\
    &\qquad\qquad \times\frac{dt_1\cdots dt_k}{\prod_{(m,n)\in\overset{\sim}{\mathcal{V}}_{\sigma,\tau;\underline{l}}^{+}} (v_m+v_n+i(t_{\sigma}-t_{\tau})) \prod_{(m,n)\in\overset{\sim}{\mathcal{V}}_{\sigma,\tau;\underline{l}}^{-}} (v_m+v_n-i(t_{\sigma}-t_{\tau}))}.
\end{align}

Finally, comparing the asymptotic formula (\ref{ref 5}) to the result of theorem \ref{ABK SO thm} shows that \\ $\gamma_{SO} (k,\beta)=\mathfrak{c}_{SO} (k,\beta)>0$ which completes the proof of theorem \ref{MoM SO thm}.

\subsection{An orthogonal family of $L$-functions}

An example of a family of $L$-functions with orthogonal symmetry is the family of quadratic twists of an elliptic curve $L$-function. Let $E$ be an elliptic curve defined over $\mathbb{Q}$ with conductor $M$. The $L$-function attached to $E$ is defined for $\text{Re}(s)>1$ by

\begin{equation}
    L_E(s)=\sum_{n=1}^{\infty} \frac{a_n}{n^{s+1/2}}=\prod_{p|M} (1-a_p p^{-s-\frac{1}{2}})^{-1} \prod_{p\nmid M} (1-a_p p^{-s-\frac{1}{2}}+p^{-2s})^{-1}:=\prod_{p}\mathcal{L}_p(p^{-s}),
\end{equation}
where the $a_p$ are related to the number of points on the reduction of $E$ mod $p$. $L_E(s)$ can be analytically continued to $\mathbb{C}$ and satisfies the functional equation

\begin{equation}
    L_E(s)=w_E Y(s) L_E(1-s),
\end{equation}
where $w_E=\pm1$ is the sign of the functional equation and

\begin{equation}
    Y(s)=\left(\frac{\sqrt{M}}{2\pi}\right)^{1-2s} \Gamma\left(\frac{3}{2}-s\right) \Gamma\left(\frac{1}{2}+s\right)^{-1}.
\end{equation}
For $d$ a fundamental discriminant with $(d,M)=1$, the twist of $L_E(s)$ by the quadratic character $\chi_d(n)=(\tfrac{d}{n})$ is

\begin{equation}
    L_E(s,\chi_d)=\sum_{n=1}^{\infty} \frac{a_n \chi_d(n)}{n^{s+1/2}}.
\end{equation}
These twisted $L$-functions can also be analytically continued to $\mathbb{C}$ and they satisfy the functional equation

\begin{equation}\label{L_E functional eq}
    L_E(s,\chi_d)=w_E \chi_d(-M) Y_d(s) L_E(1-s,\chi_d),
\end{equation}
where $Y_d(s)=|d|^{1-2s}Y(s)$. The set of $L_E(s,\chi_d)$ for which the sign $w_E \chi_d(-M)$ of the functional equation equals $+1$ forms a family with even orthogonal symmetry and so we use the special orthogonal group $SO(2N)$ for comparison.

The moments of moments of this family are defined as

\begin{equation}
    \text{MoM}_{L_E} (k,\beta)=\frac{1}{D^*} \sideset{}{^*}\sum_{\substack{|d|\leq D\\ w_E \chi_d(-M)=1}} \left(\frac{1}{2\pi} \int_0^{2\pi}  L_E (\tfrac{1}{2}+i\theta,\chi_d)^{2\beta} d\theta\right)^k,
\end{equation}
where the sum is only over fundamental discriminants and $D^{*}$ is the number of terms in the sum. The conjecture made in \cite{BK3} for this family, based on theorem \ref{ABK SO thm}, is

\begin{conjecture} [Bailey-Keating \cite{BK3}] \label{BK SO conj}
    For $k,\beta\in\mathbb{N}$ and $k,\beta$ not both 1, as $D\rightarrow\infty$,
    
    \begin{equation}
        \textup{MoM}_{L_E}(k,\beta)=\xi(k,\beta) \mathfrak{c}_{SO}(k,\beta) (\log D)^{k\beta(2k\beta-1)-k} \big(1+O_{k,\beta}\big(\log^{-1}D\big)\big),
    \end{equation}
    where $\mathfrak{c}_{SO}(k,\beta)$ corresponds to the leading order coefficient in \textup{(\ref{ABK SO result})} and $\xi(k,\beta)$ contains the arithmetic information. 
\end{conjecture}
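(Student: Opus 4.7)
The plan is to transcribe, mutatis mutandis, the argument that derived Conjecture \ref{BK Sp conj} from the CFKRS shifted-moments conjecture in the symplectic subsection, substituting the role of section 2 everywhere by section 3. First I would change the order of integration and summation in $\text{MoM}_{L_E}(k,\beta)$ to bring out the inner shifted moment $\frac{1}{D^*}\sideset{}{^*}\sum_d \prod_{m=1}^k L_E(\tfrac{1}{2}+i\theta_m,\chi_d)^{2\beta}$, where $d$ runs over fundamental discriminants with $|d|\le D$, $(d,M)=1$ and $w_E\chi_d(-M)=1$. Then I would invoke the shifted-moments conjecture of \cite{CFKRS2} for this even-orthogonal family, which expresses the inner sum, up to a power saving $O(D^{-\delta})$, as $\frac{1}{D^*}\sideset{}{^*}\sum_d \prod_m Y_d(\tfrac{1}{2}+i\theta_m)^{\beta}\,\widetilde Q_{k,\beta}(\log|d|,\underline\theta)$, where $\widetilde Q_{k,\beta}$ has the contour-integral form of (\ref{Q(k,beta) def}) but with the product $\prod_{m\le n}\zeta(1+z_m+z_n)$ replaced by $\prod_{m<n}\zeta(1+z_m+z_n)$ (this restriction is the hallmark of even orthogonal symmetry and mirrors the change from (\ref{I(Sp) def}) to (\ref{I(SO) def})), with $X(s,a)$ replaced by the factor $Y(s)$ attached to \eqref{L_E functional eq}, and with the symplectic Euler product $A_{k\beta}$ replaced by its even-orthogonal analogue $\widetilde A_{k\beta}$, holomorphic and nonzero in a neighbourhood of the origin. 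The approximant $\text{MoM}_{\widetilde Q_{k,\beta}}(D)$ is then defined exactly as in (\ref{MoM_Q def}).

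Next I would exploit the structural identification between $\widetilde Q_{k,\beta}(\log|d|,\underline\theta)$ and $I_{k,\beta}(SO(2N),\underline\theta)$ in (\ref{I(SO) def}) under $N\leftrightarrow(\log|d|)/2$: the products $\prod_{m<n}\zeta(1+z_m+z_n)$ and $\prod_{m<n}\bigl(1-e^{-z_m-z_n}\bigr)^{-1}$ share exactly the same analytic structure, namely simple poles of residue $1$ on each diagonal $z_m+z_n=0$ and holomorphy at the origin away from those diagonals, while the additional factors $\widetilde A_{k\beta}$ and the quotient of $Y$-factors are holomorphic at the evaluation points, with $Y(\tfrac{1}{2})=1$ cancelling the corresponding shift-symmetric piece. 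Consequently lemma \ref{zero summands} kills precisely the same summands, and the entire analytic machinery of section 3 — the contour deformation, the change of variables $z_n=v_n/N+i\mu_n$, the sets $\widetilde{\mathcal T}_{k,\beta;\underline l}$, $\widetilde{\mathcal U}^{\pm}_{\sigma,\tau;\underline l}$, $\widetilde{\mathcal V}^{\pm}_{\sigma,\tau;\underline l}$, the rescaling $t_m=N\theta_m$ and the resulting $\Omega_{k,\beta}(\underline v;\underline l)$ — carries over verbatim, with the only effect of the arithmetic factors being the appearance of the constant $\widetilde A_{k\beta}(0,\dots,0)$. Standard partial summation over fundamental discriminants in the sign-restricted family (absorbed into the normalisation by $D^*$) then yields
\begin{equation*}
    \text{MoM}_{\widetilde Q_{k,\beta}}(D) \sim \widetilde A_{k\beta}(0,\dots,0)\,\gamma_{SO}(k,\beta)\,\Bigl(\tfrac{\log D}{2}\Bigr)^{k\beta(2k\beta-1)-k}\bigl(1+O(\log^{-1}D)\bigr),
\end{equation*}
and since $\gamma_{SO}(k,\beta)=\mathfrak c_{SO}(k,\beta)$ by theorem \ref{MoM SO thm}, absorbing the constants $\widetilde A_{k\beta}(0,\dots,0)$ and $2^{-(k\beta(2k\beta-1)-k)}$ into $\xi(k,\beta)$ recovers Conjecture \ref{BK SO conj}.

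The main obstacle I anticipate is not analytic but structural bookkeeping: one must check carefully that the CFKRS recipe applied to the sign-restricted subfamily $\{d:w_E\chi_d(-M)=1\}$ really does produce $\widetilde Q_{k,\beta}$ with the product taken over $m<n$ (as orthogonal symmetry demands), so that the parallel with (\ref{I(SO) def}) is exact rather than off by one diagonal, and one must identify $\xi(k,\beta)$ explicitly with $\widetilde A_{k\beta}(0,\dots,0)$ up to elementary constants. Once these checks are in place, every nontrivial analytic estimate in the proof has already been carried out in section 3, so the conjecture follows with essentially no new work.
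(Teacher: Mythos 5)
Your plan matches the paper's argument almost exactly: change the order of integration, invoke the CFKRS shifted-moments conjecture for the even orthogonal family (the paper's approximant is called $\Upsilon_{k,\beta}(x,\underline\theta)$ with arithmetic factor $B_{k\beta}$, matching your $\widetilde Q_{k,\beta}$ and $\widetilde A_{k\beta}$), note the structural isomorphism between the contour integral and $I_{k,\beta}(SO(2N),\underline\theta)$, and rerun the machinery of Section~3 to produce $\gamma_{SO}(k,\beta)$, which equals $\mathfrak c_{SO}(k,\beta)$ by Theorem~\ref{MoM SO thm}.

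There is one concrete bookkeeping error in your proposal. You identify $N\leftrightarrow(\log|d|)/2$, which is the symplectic scaling; but for the quadratic twists of an elliptic curve the CFKRS recipe gives an exponential factor $e^{x\sum_{n=1}^{2k\beta}z_n}$ (see \eqref{Upsilon(k,beta) def}), not $e^{\frac{x}{2}\sum z_n}$ as in the Dirichlet/symplectic case \eqref{Q(k,beta) def}. The reason is visible in the functional equations: $X_d(s)=|d|^{1/2-s}X(s,a)$ so $X_d(\tfrac12+i\theta)$ carries $|d|^{-i\theta}$, whereas $Y_d(s)=|d|^{1-2s}Y(s)$ so $Y_d(\tfrac12+i\theta)$ carries $|d|^{-2i\theta}$ — the analytic conductor of $L_E(s,\chi_d)$ scales like $d^2$, doubling the effective matrix dimension. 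The correct identification is therefore $N\leftrightarrow\log|d|$, and the asymptotic for $\text{MoM}_{\Upsilon_{k,\beta}}(D)$ comes out as $B_{k\beta}(0,\dots,0)\,\gamma_{SO}(k,\beta)\,(\log D)^{k\beta(2k\beta-1)-k}$ with no extraneous power of $2$. Your version would hide a spurious factor $2^{-(k\beta(2k\beta-1)-k)}$ inside $\xi(k,\beta)$; since $\xi$ is unspecified in Conjecture~\ref{BK SO conj} this does not invalidate the deduction, but it does misidentify the arithmetic constant and should be corrected.
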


Once again, we can change the order of integration and summation to write

\begin{equation}
    \text{MoM}_{L_E} (k,\beta)=\frac{1}{(2\pi)^k} \int_0^{2\pi}\cdots\int_0^{2\pi} \frac{1}{D^*} \sideset{}{^*}\sum_{\substack{|d|\leq D\\ w_E \chi_d(-M)=1}} \prod_{m=1}^k L_E (\tfrac{1}{2}+i\theta_m,\chi_d)^{2\beta} d\theta_1\cdots d\theta_k,
\end{equation}
and we have the following conjecture of \cite{CFKRS2} for the shifted moments in the integrand.

\begin{conjecture} [Conrey et al. \cite{CFKRS2}]
    Let $k,\beta\in\mathbb{N}$ and let $\underline{\theta}=(\theta_1,\dots,\theta_k)\in\mathbb{R}^k$. Then,
    
    \begin{equation}
        \frac{1}{D^*} \sideset{}{^*}\sum_{\substack{|d|\leq D\\ w_E \chi_d(-M)=1}} \prod_{m=1}^k L_E (\tfrac{1}{2}+i\theta_m,\chi_d)^{2\beta}=\frac{1}{D^*} \sideset{}{^*}\sum_{\substack{|d|\leq D\\ w_E \chi_d(-M)=1}} \prod_{m=1}^k Y_d (\tfrac{1}{2}+i\theta_m)^{\beta} \Upsilon_{k,\beta} (\log |d|,\underline{\theta})+O(D^{-\delta}),
    \end{equation}
    for some $\delta>0$, where
    
    \begin{align} \label{Upsilon(k,beta) def}
        \Upsilon_{k,\beta} (x,\underline{\theta})=& \frac{(-1)^{k\beta} 2^{2k\beta}}{(2\pi i)^{2k\beta} (2k\beta)!} \nonumber \\
        &\qquad \times\oint\dotsi\oint \frac{H(z_1,\dots,z_{2k\beta}) \Delta(z_1^2,\dots,z_{2k\beta}^2)^2 \prod_{n=1}^{2k\beta} z_n}{\prod_{n=1}^{2k\beta} \prod_{m=1}^k (z_n-i\theta_m)^{2\beta}(z_n+i\theta_m)^{2\beta}} e^{x \sum_{n=1}^{2k\beta}z_n} dz_1\dots dz_{2k\beta},
    \end{align}
    in which the path of integration encloses the poles at $\pm i\theta_m$ for $1\leq m\leq k$. Also,
    
    \begin{equation}
        H(z_1,\dots,z_{2k\beta})=B_{k\beta} (z_1,\dots,z_{2k\beta}) \prod_{n=1}^{2k\beta} Y(\tfrac{1}{2}+z_n)^{-\frac{1}{2}} \prod_{1\leq m<n\leq 2k\beta} \zeta(1+z_m+z_n),
    \end{equation}
    where $B_{k\beta}$ is the Euler product, absolutely convergent for $|\textup{Re}(z_n)|<1/2$, defined by
    
    \begin{align}
        B_{k\beta} (z_1,\dots,z_{2k\beta})=& \prod_p \prod_{1\leq m<n\leq 2k\beta} \left(1-\frac{1}{p^{1+z_m+z_n}}\right)\nonumber \\
        &\times\left(\frac{1}{2} \left(\prod_{n=1}^{2k\beta} \mathcal{L}_p \left(\frac{1}{p^{1/2+z_n}}\right)+\prod_{n=1}^{2k\beta} \mathcal{L}_p \left(\frac{-1}{p^{1/2+z_n}}\right)\right)+\frac{1}{p}\right) \left(1+\frac{1}{p}\right)^{-1}.
\end{align}
\end{conjecture}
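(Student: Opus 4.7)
The plan is to derive the formula via the CFKRS ``recipe'' specialized to the quadratic-twist family $\{L_E(\cdot,\chi_d):|d|\leq D,\ w_E\chi_d(-M)=1\}$, which has even orthogonal symmetry. First I would apply the approximate functional equation built from (\ref{L_E functional eq}) to each factor $L_E(\tfrac12+i\theta_m,\chi_d)^{2\beta}$, so that each $2\beta$-th power becomes a balanced sum of a Dirichlet series of length $\asymp|d|^{\beta}$ and its functional-equation image. Expanding the product over $m=1,\dots,k$ then produces $2^{2k\beta}$ ``swap'' contributions indexed by subsets $S\subseteq\{1,\dots,2k\beta\}$: each contribution carries shift variables $z_n=\pm i\theta_{m(n)}$ and an explicit gamma factor which, reorganized across swaps, yields the external factor $\prod_{m=1}^k Y_d(\tfrac12+i\theta_m)^{\beta}$ together with internal factors $\prod_n Y(\tfrac12+z_n)^{-1/2}$.

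Next I would carry out the arithmetic averaging over the restricted family. For each swap, expanding the local Euler pieces $\mathcal{L}_p(1/p^{1/2+z_n})$ produces Dirichlet coefficients $a_{n_1}\cdots a_{n_{2k\beta}}$ twisted by $\chi_d(n_1\cdots n_{2k\beta})$. The recipe prescription is to retain only those summands for which $n_1\cdots n_{2k\beta}$ is a perfect square (up to primes dividing $M$), on the grounds that the character average of $\chi_d(\cdot)$ over the restricted family vanishes or exhibits square-root cancellation on non-square arguments. The density factor $(1+1/p)^{-1}$ at each prime comes from the admissibility condition $w_E\chi_d(-M)=1$, and the $\tfrac12(\cdots+\cdots)+\tfrac1p$ average in the definition of $B_{k\beta}$ arises from summing over the two classes of quadratic characters at $p$. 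Resumming prime-by-prime isolates a polar part $\prod_{1\leq m<n\leq 2k\beta}\zeta(1+z_m+z_n)$ from the pairings $n_jn_k=\square$, leaving the remaining holomorphic local data as the Euler product $B_{k\beta}$.

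Once the diagonal is extracted for each swap, the repackaging step is combinatorial: the sum of the $2^{2k\beta}$ swap contributions coincides with the sum of residues that one obtains when the $2k\beta$ contour variables are deformed to enclose only the poles at $\pm i\theta_m$. The prefactor $\Delta(z_1^2,\dots,z_{2k\beta}^2)^2\prod_n z_n$ together with the double poles $(z_n\mp i\theta_m)^{-2\beta}$ is exactly the symmetrizer that makes this residue identity reproduce all $2^{2k\beta}$ terms with the correct signs and weights. The fact that the zeta product ranges over $m<n$ rather than $m\leq n$, mirroring the passage from (\ref{I(Sp) def}) to (\ref{I(SO) def}), records the even-orthogonal symmetry of this family and is forced by the self-duality of $L_E(s,\chi_d)$ with sign $+1$. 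Matching to (\ref{Upsilon(k,beta) def}) delivers the main term.

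The essential obstacle is the power-saving error $O(D^{-\delta})$. The effective length of the approximate functional equation is $\asymp D^{k\beta}$, vastly exceeding the family size $D$, so establishing the error bound reduces to exhibiting square-root-type cancellation in very long character sums averaged over $|d|\leq D$. Heath-Brown's quadratic large sieve delivers this for $k\beta=1$, but for general $k,\beta\in\mathbb{N}$ no unconditional such bound is known, which is precisely why the statement remains a conjecture. The proposal above therefore gives a rigorous derivation of the shape of the main term along the CFKRS lines, with the recipe's error estimate taken as the standard (but still heuristic) closing step.
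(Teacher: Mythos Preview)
The statement you were asked to address is a \emph{conjecture}, not a theorem: the paper does not prove it and makes no attempt to. It is quoted verbatim from Conrey--Farmer--Keating--Rubinstein--Snaith \cite{CFKRS2} and then \emph{assumed} as input in order to deduce the asymptotic behaviour of $\text{MoM}_{\Upsilon_{k,\beta}}(D)$. So there is no ``paper's own proof'' to compare your proposal against.

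That said, what you have written is an accurate summary of the CFKRS recipe as applied to this even-orthogonal family, and you correctly identify the point at which the argument ceases to be rigorous: the off-diagonal/error contribution requires square-root cancellation in character sums of length $\asymp D^{k\beta}$ averaged over a family of size $\asymp D$, which is out of reach for $k\beta>1$. Your remark that the $m<n$ (rather than $m\leq n$) range in the zeta product encodes the even-orthogonal symmetry, paralleling the distinction between (\ref{I(Sp) def}) and (\ref{I(SO) def}), is also the right structural observation. In short, your proposal is not a proof---nor could it be---but it is a faithful account of the heuristic that produces the conjectured main term, and you have correctly flagged the genuine obstruction.
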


Naturally, we define

\begin{equation}
    \text{MoM}_{\Upsilon_{k,\beta}} (D):=\frac{1}{(2\pi)^k} \int_0^{2\pi}\cdots\int_0^{2\pi} \frac{1}{D^*} \sideset{}{^*}\sum_{\substack{|d|\leq D\\ w_E \chi_d(-M)=1}}  \prod_{m=1}^k Y_d (\tfrac{1}{2}+i\theta_m)^{\beta} \Upsilon_{k,\beta} (\log |d|,\underline{\theta})\, d\theta_1\cdots d\theta_k,
\end{equation}
which should approximate $\text{MoM}_{L_E} (k,\beta)$ up to a power saving in $D$. Similarly to the symplectic case considered earlier, we can clearly see the similarity between the integral expressions for $\Upsilon_{k,\beta} (x,\underline{\theta})$ above and $I_{k,\beta} (SO(2N),\underline{\theta})$ in (\ref{I(SO) def}). Hence, by following the proof of theorem \ref{MoM SO thm} and taking into account the arithmetic factors just as in the case of the quadratic Dirichlet $L$-functions, one can show that

\begin{align}
    \text{MoM}_{\Upsilon_{k,\beta}} (D)\sim& B_{k\beta} (0,\dots,0) \gamma_{SO} (k,\beta) \frac{1}{D^*} \sideset{}{^*}\sum_{\substack{|d|\leq D\\ w_E \chi_d(-M)=1}} (\log |d|)^{k\beta(2k\beta-1)-k} \nonumber \\
    =& B_{k\beta} (0,\dots,0) \gamma_{SO} (k,\beta) (\log D)^{k\beta(2k\beta-1)-k} \left(1+O(\log^{-1} D)\right),
\end{align}
where $\gamma_{SO} (k,\beta)$ is as defined in (\ref{gamma_SO def}). Therefore, conjecture \ref{BK SO conj} also follows from the shifted moment conjecture of \cite{CFKRS2}.

\vspace{0.5cm}

\noindent \textit{Acknowledgment.}  
The first author is grateful to the Leverhulme Trust (RPG-2017-320) for the support through the research project grant ``Moments of $L$-functions in Function Fields and Random Matrix Theory". The research of the second author is supported by an EPSRC Standard Research Studentship (DTP) at the University of Exeter.

\end{document}